\newtheorem{theorem}{Theorem}
\newtheorem{lemma}[theorem]{Lemma}
\newtheorem{corollary}[theorem]{Corollary}
\newcommand{\opt}{\textrm{opt}}
\newcommand{\AdSch}{{\sc AdSch}}
\newcommand{\AdSub}{{\sc AdSub}}
\newcommand{\WS}{\sc WS}
\newcommand{\mathsc}[1]{{\normalfont\textsc{#1}}}
\title{Optimal Strategies for Weighted Ray Search}
\author[1]{Spyros Angelopoulos}
\author[2]{Konstantinos Panagiotou}
\affil[1]{CNRS and Sorbonne   Universit\'e,  Laboratoire   d’Informatique  de  Paris   6, Paris, France. {\tt spyros.angelopoulos@lip6.fr}}
\affil[2]{LMU Munich, Munich, Germany. {\tt kpanagio@lmath.lmu.de}}
\date{}
\begin{document}

\maketitle

\begin{abstract}
We introduce and study the general setting of {\em weighted search} in which a number of targets, each with a certain weight, are hidden in a star-like environment that consists of $m$ infinite, concurrent rays, with a common origin. A mobile searcher, initially located at the origin,  explores this environment so as to locate a set of targets whose aggregate weight is at least a given value $W$. The cost of the search strategy is defined as the total distance traversed by the searcher, and its performance is measured by the worst-case ratio of the cost incurred by the searcher over the cost of an on optimal, offline strategy with complete access to the instance. This is the first study of a setting that generalizes several problems in search theory: the problem in which only a single target is sought, as well as the problem in which all targets have unit weights.

We present and analyze a search strategy of near-optimal performance for the problem at hand. 
We observe that the classical approaches that rely on geometrically increasing search depths perform  rather poorly in the context of weighted search. We bypass this problem by using a 
strategy that modifies the search depths adaptively, depending on the number of targets located up to the current point in time.
\end{abstract}



\section{Introduction}

In this paper we introduce and study the following search problem. We are given a star-like environment that consists of $m$ concurrent {\em rays}
of infinite length, with a common origin~$O$. For each ray $i \in \{0,\ldots,m-1\}$, there is a {\em target} of weight $w_i \geq 0$ that is hidden at some distance $d_i\geq 0$ from~$O$. Note that the setting allows cases such as $d_i=\infty$ (i.e., there is no target hidden on ray $i$) 
and $w_i=0$ (i.e., the target has no weight). A mobile searcher is initially located at $O$, and its objective is to locate a subset of 
targets whose aggregate weight is at least a specified value $W$. The searches knows the number of rays $m$, however, it 
has no further knowledge about the instance, namely the weights of the targets and their distances from~$O$. 

A {\em search strategy} $\Sigma$ for this problem is specified by determining, at every point in time that the searcher is at the origin, 
a ray $r$ and a depth $\ell_r$, such that the searcher will next traverse ray $r$ to depth at most $\ell_r$ and then will return back to $O$. If during this search a target was found, and as long as the weight accrued is less than $W$, the searcher will immediately return to $O$ and will never again traverse ray 
$r$. In general, we assume that the search strategies have {\em memory};  the pairs $(r,\ell_r)$ may depend, among other things, on targets already discovered by $\Sigma$ and the already explored depths.

As it is quite common in such problems, we will evaluate the performance of a search strategy $\Sigma$ 
by means of the well-established {\em competitive ratio}, which can be traced to early work in~\cite{beck:yet.more} 
in the context of the linear search problem. Informally, the competitive ratio compares the performance of a search strategy
that is oblivious of the instance (that is, the exact position of the targets and their weights) to an omniscient searcher that has full information of the instance. To formalize this concept, given a number of rays $m \ge 2$, let us
denote by ${\cal I}_m$ the set of all instances to the search problem, namely
$${\cal I}_m=\left\{\big(W, (d_i,w_i)_{0 \leq i \leq m-1\}}\big):
W \geq 0, w_i\geq 0, d_i \geq 1, \text{ and } \sum_{0\le i\le m-1} w_i\geq W\right\}.$$
Note that we make two standard assumptions in the field, namely that 
all targets are at least a unit distance from $O$ -- otherwise, no strategy can have a bounded competitive ratio -- and that there is a feasible solution to the instance.
Given $I \in {\cal I}_m$   the {\em cost of $\Sigma$ on $I$}, denoted by $c(\Sigma,I)$ is defined as the total distance traversed by the searcher until the first time it discovered targets of aggregate weight at least $W$. We also denote by $\opt(I)$ the {\em optimal cost of $I$}, namely the cost of an ideal strategy
that has complete knowledge of the instance $I$ (i.e., the positions of the targets and their weights, as well as $W$). 
A strategy $\Sigma$ is called {\em $\rho$-competitive}  for some $\rho\geq 1$, if $c(\Sigma,I) \leq \rho \cdot \opt(I)$, for all $I \in {\cal I}_m$.
The competitive ratio of $\Sigma$ is then defined as 
\begin{equation}
\rho(\Sigma)=\sup_{I \in {\cal I}_m} \frac{c(\Sigma,I)} {\opt(I)},
\label{eq:competitive.ratio}
\end{equation}
and reflects the overhead of $\Sigma$ due to lack of information. A strategy of minimum competitive ratio is called {\em optimal}.
Figure~\ref{fig:weighted} illustrates an example of an instance. 
\begin{figure}[htb!]
    \centering
    \includegraphics[height=5.5cm, width=9cm]{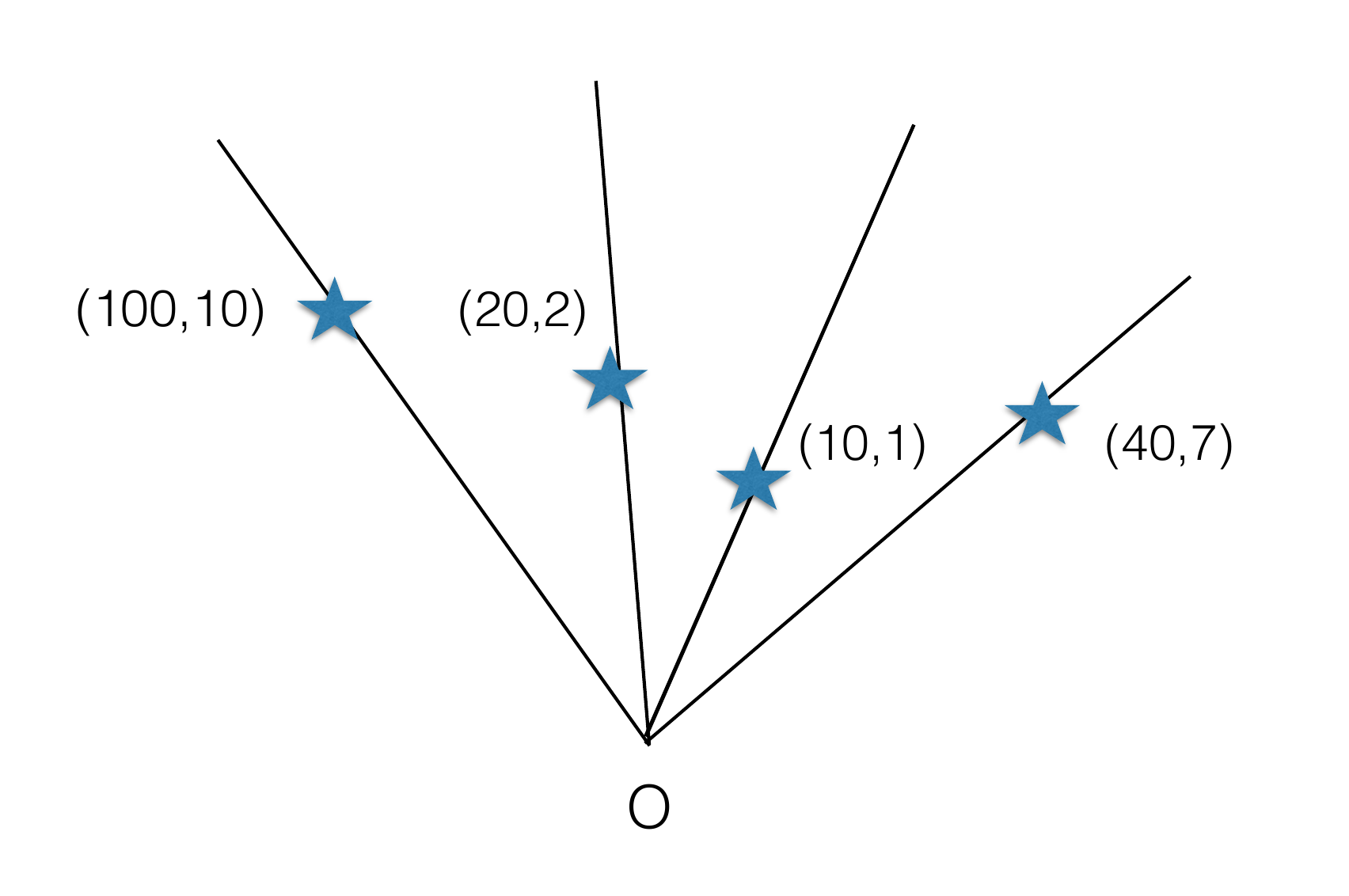}
  \caption{Illustration of the weighted star search problem, for an instance $I$ with $m=4$ rays. The target at ray $i$ is represented by the pair 
  $(d_i,w_i)$, where $d_i$ denotes its distance from $O$ and $w_i$ its weight. Assume an indexing of rays as $0\ldots 3$, from left to right, and a value of $W$ equal to 10. Then an optimal strategy will either search the ray 0 at a cost of 100 or the rays 1,2,3 (in this order), which also yields the optimal cost of $2\cdot 20 + 2\cdot 10 + 40 = $ 100.}
  \label{fig:weighted}
\end{figure}


Our setting is motivated by many factors. First, it generalizes several well-studied problems, most notably the 
(unweighted) {\em ray search} or {\em star search} problem in~\cite{gal:minimax}, 
in which a single target is present on some ray.
Note that star search is a generalization of {\em linear search}, known informally as the cow-path 
problem, originally studied in~\cite{bellman} and~\cite{beck:ls}. In addition, the weighted setting subsumes the multi-target search of~\cite{multi-target}, in which each target has unit weight, and the objective is to locate~$t$ targets, for some~$t \in {\mathbb  N}^+$ that is known to the strategy.

Furthermore, weighted search provides a useful abstraction of the setting in which one has to allocate resources to different tasks without advanced knowledge of the degree to which each task will prove itself fruitful. This is a fundamental aspect of many decision-making processes. For instance, a researcher may want to determine how to allocate her time among $m$ different projects, without knowing in advance how useful the outcome of each project will be towards a combined publication of the potential results. Here, the projects can be modeled as weighted targets, with the distance to a target representing the time needed to the successful completion of the corresponding project. For a different example, drilling for oil in multiple locations can also be modeled as a star-search problem; see the discussion in~\cite{oil}. 

In addition, we emphasize that ray searching can often model settings that go far beyond the search paradigm. 
Notable examples include: the design of {\em interruptible} algorithms, i.e., algorithms that return acceptable solutions even if interrupted during their execution (see~\cite{steins,spyros:ijcai15});
the synthesis of {\em hybrid} algorithms based on a suite of heuristics shown in~\cite{hybrid}; and database query optimization (in particular, pipeline filter ordering as studied in~\cite{Condon:2009:ADA:1497290.1497300}, which has an equivalent search-related formulation over a star graph as discussed in~\cite{stacs-expanding}). 
Thus, solutions to the weighted search problem can provide a framework for solving weighted variants of the above settings. For instance, one may define hybrid algorithms in which each heuristic execution is assigned a score related to the quality of the returned solution, or a system of interleaved executions of interruptible algorithms in which each algorithm is assigned a priority based on its importance.

For the above reasons, star search can be seen as a type of {\em scheduling} or {\em sequencing} problem, in that we seek a strategy for allocating resources (available search time) to different tasks (the set of rays). See, e.g.,~\cite{steins} for an explicit interpretation of star search as a scheduling problem. In this sense, our setting is motivated by considerations that are natural in scheduling theory, namely the transition from the simple, unweighted variants to more complex and inclusive weighted ones.

\subsection{Contribution}

In this work we present and analyze a search strategy called {\sc AdaptiveSearch} (or \AdSch \ for brevity) 
for weighted ray search that attains a (near-)optimal competitive ratio. Define the function $\phi$ as
\[
\phi(x)=1+2\left(1+x\right)\left(1+\frac{1}{x}\right)^{x}, \textrm{ $x>0$}.
\]
Note that $\phi(x)=\Theta(x)$ as $x$ gets large. As we will see, function $\phi$ plays a central role in the analysis and reveals connections to  previous studies of the simpler variants of search problems; we will elaborate more on that in due course. Our first theorem provides a tight, worst-case analysis of the competitive ratio of weighted search. 
\begin{theorem} 
The competitive ratio of \AdSch \ is $\phi(m-1)$, and this is the optimal competitive ratio for weighted search.
\label{thm:no.constraint}
\end{theorem}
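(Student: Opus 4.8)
The plan is to prove the two directions separately, and the guiding intuition throughout is that the hardest instances are those in which a single target carries all the weight, so that the weighted problem is neither easier nor harder than classical single-target search. For the lower bound I would argue by reduction. Restrict attention to the instances of ${\cal I}_m$ in which exactly one ray carries a target of unit weight, every other target has weight $0$, and $W=1$. Zero-weight targets are worthless to both the searcher and the offline optimum, so gathering enough weight is equivalent to finding the one positive target hidden on one of $m$ rays; these are precisely the instances of classical $m$-ray (star) search. Since any weighted-search strategy must in particular handle these instances, and the optimal competitive ratio for single-target $m$-ray search is the known quantity $1+2\,m^{m}/(m-1)^{m-1}$, which a direct calculation identifies with $\phi(m-1)$, every strategy has competitive ratio at least $\phi(m-1)$.

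For the upper bound I would analyze \AdSch\ as a geometric, round-robin strategy parameterized by a growth base $b$: it cycles through the rays whose target has not yet been located, probes the ray chosen on the $j$-th excursion to a depth proportional to $b^{j}$, collects a target and deactivates its ray whenever one is found, and halts as soon as the aggregate collected weight reaches $W$; the base is tuned to the number of active rays, with the value $b=m/(m-1)$ governing the extreme case. The heart of the argument is a structural lemma that lower-bounds the offline cost. Suppose \AdSch\ halts on its $K$-th excursion. The handful of excursions immediately preceding it sweep through every still-active ray, and since the probe depths grow geometrically, by that point every ray has been probed to a depth at least the ``previous-round'' value of order $b^{K-m}$; consequently every target within that distance has already been collected. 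Because the weight gathered strictly before the final excursion is less than $W$, those near targets cannot meet the quota on their own, so any feasible schedule, the offline optimum included, is forced to reach a target beyond distance $\sim b^{K-m}$. Hence the optimal cost \opt\ is at least of order $b^{K-m}$.

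It then remains to bound the distance \AdSch\ travels, namely at most $2\sum_{j<K} b^{j}+d^{\ast}$, where $d^{\ast}\le b^{K}$ is the depth at which the final target is found, to form the ratio of this quantity to the lower bound on \opt, and to optimize over $b$. Carried out exactly, the optimization selects $b=m/(m-1)$ and returns the value $\phi(m-1)$, matching the lower bound; the fact that the two bounds meet confirms that the single-target instances used above are genuinely worst-case.

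The main obstacle, and what makes weighted search genuinely more delicate than single-target search, is to show that none of its extra structure ever drives the ratio above $\phi(m-1)$. Two competing effects must be reconciled. On one hand, weight accrued ``for free'' as \AdSch\ sweeps past shallow targets only helps, since it lets the searcher meet the quota sooner; on the other hand, the pool of active rays shrinks as targets are deactivated, which shortens the round-robin cycle, strengthens the previous-round lower bound on \opt, and invites re-tuning $b$ to the current count $m'$, for which $\phi(m'-1)\le\phi(m-1)$ by monotonicity of $\phi$. The subtle part is to compose the costs accumulated over the successive phases delimited by these deactivations without incurring loss at the transitions, and to make the accounting tight enough --- sharper than the crude geometric estimate sketched above --- that the upper bound equals $\phi(m-1)$ exactly rather than merely asymptotically, so that all the inequalities are simultaneously tight precisely on the single-target instances.
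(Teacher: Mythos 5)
Your lower bound is exactly the paper's: restrict to instances with a single positive-weight target and $W$ equal to that weight, observe that these are the classical $m$-ray search instances, and invoke the known optimal ratio $1+2m^m/(m-1)^{m-1}=\phi(m-1)$. That part is correct. The upper bound, however, is where essentially all of the paper's work lies, and your sketch has a genuine gap there. First, your description of the strategy --- ``probes the ray chosen on the $j$-th excursion to a depth proportional to $b^{j}$,'' with $b$ ``tuned to the number of active rays'' --- is ambiguous between two strategies with very different behaviour. The paper shows that the natural candidate, which after $t-1$ targets have been found searches to depth $b_{m,t}^{i}$ on step $i$, has \emph{unbounded} competitive ratio: a sought target hiding just beyond an old probe depth $b_{m,1}^{i-1}$ is reached only after cost about $b_{m,2}^{i+m-2}$, and $(b_{m,2}/b_{m,1})^{i}$ diverges with $i$. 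The strategy that works must instead multiply the depth of the last \emph{unsuccessful} exploration by the new base, so that search lengths never jump at a phase transition; your proposal does not pin this down, and with the wrong reading the claimed bound is simply false.

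Second, your lower bound on $opt$ --- every ray has been probed to depth of order $b^{K-m}$, hence $opt\gtrsim b^{K-m}$ --- is the single-base, single-phase estimate. Once the base changes at each target discovery, the depth to which a given ray was last probed is a product of \emph{different} bases raised to phase-dependent exponents (the paper's inequality~\eqref{eq:lowerDj}), and the ratio of the total cost to this quantity is no longer a clean geometric sum that one ``optimizes over $b$.'' You acknowledge this yourself in your final paragraph (the accounting must be ``sharper than the crude geometric estimate sketched above'' and must ``compose the costs accumulated over the successive phases \dots\ without incurring loss at the transitions''), but naming the obstacle is not the same as overcoming it: that accounting is precisely the content of Lemma~\ref{thm:SubsetSearchUpper}, proved via the reduction to subset search (Lemma~\ref{thm:reduction}), an induction over the gap between the last phase and the phase in which the decisive ray was last probed, and two nontrivial analytic lemmas (Lemmas~\ref{lem:boundsf} and~\ref{lem:technicality}) controlling the contribution of targets found outside the optimal set. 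Your proposal contains no substitute for that argument, so the upper bound is not established. A minor further omission: the extreme case in which all $m$ targets must be located needs separate treatment (the paper obtains $3+2e<\phi(1)\le\phi(m-1)$ there), since the generic phase analysis degenerates when no ray remains unexplored.
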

Some remarks are in place. Let us consider the special case of the search problem we study, namely the classic problem of searching on $m$ rays. Here, there is precisely one target that hides in some ray unknown to the searcher. Using our notation, this problem corresponds to the case  $d_i = \infty$ for all rays $0 \le i \le m-1 $ except for exactly one ray $i^*$, and $W = w_{i^*}$. The competitive ratio of this problem was determined by~\cite{gal:minimax}, who showed that it is equal to $\phi(m-1)$. It is then immediate that the competitive ratio of weigthed search is at least $\phi(m-1)$; the main contribution of Theorem~\ref{thm:no.constraint} is to establish that this is also an upper bound. In other words, our first result says that in the worst case, weighted search is \emph{at most} as difficult as the conceptually simpler problem of searching one target in $m$ rays. 

Even though Theorem~\ref{thm:no.constraint} is tight for some instances, a moment of thought reveals that it actually provides a rather too pessimistic worst-case guarantee. In order to understand this better, let us consider another special case of weighted search, namely {\em multi-target} search in $m$ rays. In this variant, introduced and studied in~\cite{multi-target}, the searcher must locate exactly $1\le t \le m-1$ targets that are placed on rays again unknown to the searcher; in our notation this corresponds to $d_i = \infty$, for all except of $t$ rays $0\le i_1^*< \dots < i^*_t\le m-1$ with weights $w_{i^*_1} = \dots = w_{i^*_t} = W/t$. In~\cite{multi-target} it was shown that the competitive ratio of this problem is $\phi(m-t)$, which can be rephrased as ``searching for $t$ targets on $m$ rays is as difficult as searching for one target in $m-t+1$ rays''. Note that the larger~$t$ is, the better the competitive ratio becomes.

The results in~\cite{multi-target} leave open the possibility that more refined guarantees than $\phi(m-1)$ may be attainable for weighted search; in particular, one expects that if many targets have to be located, then the competitive ratio should decrease. 
In order to make this precise, we follow a {\em parameterized} approach, in which we study the performance of a strategy 
not only in terms of $m$, but also in terms of some natural parameters that reflect the ``hardness'' of the instance. 
More specifically
given  $I= \{(W, (d_i,w_i)_{0 \leq i \leq m-1\}})\}$, define $W_I=W$ and $w_{\max,I}=\max_{0\leq i \leq m-1} w_i$. 
Then any strategy, including the offline optimal one that knowns $I$, must locate at least $\lceil W_I /w_{\max,I} \rceil$ targets. The question then 
is: what is the competitiveness of weighted search, in terms of both $m$ and the above bound on the number of targets that must be found?
Our next result gives an near-tight answer, and generalizes the known results on multi-target search.
\begin{theorem}
Let $I \in {\cal I}_m$ be such that $w_{\max,I}>0$. If $\lceil W_I /w_{\max,I} \rceil < m$, then 
\[
c(\normalfont\mathsc{AdSch},I) \leq \phi \left( m-\lceil {W_I}/{w_{\max,I}} \rceil \right) \cdot \opt(I),
\]
and $c(\normalfont\mathsc{AdSch},I) \leq (3+2e) opt(I)$ otherwise. Moreover, for every strategy $\Sigma$, there exists 
$I \in {\cal I}_m$ with $\frac{{W_I}}{{w_{\max,I}}}<m$ such that 
\[
c(\Sigma,I) \geq \phi \left( m-\lceil {W_I}/{w_{\max,I}} \rceil \right) \cdot \opt(I),
\]
and there exists $I \in {\cal I}_m$ with ${{W_I}}/{{w_{\max,I}}}=m$
such that $c(\Sigma,I) \geq 2 \opt(I)$. 
\label{thm:weights}
\end{theorem}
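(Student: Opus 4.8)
Throughout write $t = \lceil W_I / w_{\max,I}\rceil$, and note first that feasibility forces a convenient dichotomy: since every $w_i \le w_{\max,I}$ and there are $m$ targets, $\sum_i w_i \le m\,w_{\max,I}$, so $W_I \le m\,w_{\max,I}$ and hence $t \le m$. Consequently the clause ``otherwise'' covers exactly the case $t=m$, i.e.\ the instances in which \emph{every} target must be collected. The starting point for both upper bounds is that any feasible collection of targets has cardinality at least $t$, so the searcher must locate at least $t$ distinct targets before it may stop. My plan is to run the competitive analysis of \AdSch\ essentially as in the proof of Theorem~\ref{thm:no.constraint}, but to account for how much of the work is ``amortised away'' by the fact that $t-1$ of the required targets are necessarily discovered en route.

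For the regime $t<m$ I would lower-bound $opt(I)$ by the quantity $R$, defined as the smallest radius such that the targets within distance $R$ of $O$ have aggregate weight at least $W_I$; since the offline solution must reach depth $R$ on some ray, $opt(I)\ge R$. I then want to show that \AdSch\ has collected weight $W_I$ by the time its geometric schedule has reached depth about $R$, and that the decisive (last) target is effectively sought among only $m-(t-1)=m-t+1$ rays: the $t-1$ shallower targets contributing the remaining weight are found cheaply while the schedule ramps up, and each such discovery deactivates its ray and removes a branch of the search. Feeding the effective ray count $m-t+1$ into the geometric-series bound of Theorem~\ref{thm:no.constraint}, which for a single target on $k$ rays gives the factor $\phi(k-1)$, yields $c(\AdSch,I)\le \phi(m-t)\cdot R\le \phi(m-t)\cdot opt(I)$. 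The step I expect to be the main obstacle is making ``effective ray count $m-t+1$'' rigorous: one must show, uniformly over all adversarial placements, that the contribution of the $t-1$ amortised targets is dominated by the cost of the final search, and in particular that accumulating weight along the way never forces a worse geometric base than in the single-target analysis.

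When $t=m$ every ray carries a required target, so there is no uncertainty about \emph{which} rays to visit, only about the depths. Here I would use $opt(I)=2\sum_i d_i-\max_i d_i$ (in particular $opt(I)\ge \max_i d_i$), and bound \AdSch\ from above by analysing its schedule as a family of doubling searches, one per ray, in which each target is found exactly at its depth and no effort is wasted on empty rays. Summing the geometric series governing the schedule and charging the round trips on all but the last ray should yield the constant $3+2e$, with the $2e$ arising from the limiting base $(1+1/x)^x\to e$ of the schedule and the additive $3$ from the unavoidable return trips.

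For the matching lower bounds I argue against a fixed strategy $\Sigma$. In the case $t<m$ I construct $I$ by planting $t-1$ targets of weight $w_{\max,I}$ at the minimum depth $1$, which $\Sigma$ is compelled to collect but at cost only $O(t)$, and by hiding the remaining required weight on a single deep target among the other $m-t+1$ rays. Choosing that target's ray and depth according to $\Sigma$'s trajectory reduces the situation to the classical single-target lower bound on $m-t+1$ rays; letting the depth grow makes the $O(t)$ prefix negligible, so the ratio tends to $\phi((m-t+1)-1)=\phi(m-t)$, while $W_I/w_{\max,I}<m$ holds by construction. For the case $t=m$ with $W_I/w_{\max,I}=m$, the point is that $\Sigma$ cannot identify the deepest ray without first reaching its target: an adaptive adversary keeps the targets just beyond $\Sigma$'s probes and reveals one at the tip of a deep excursion, forcing a round trip of length $\approx 2D$ on a target that the offline solution defers to the end and reaches one-way at cost $\approx D$, so that $c(\Sigma,I)/opt(I)\to 2$. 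Since only a factor of $2$ (not tightness) is claimed here, a crude version of this forced-backtracking construction should suffice.
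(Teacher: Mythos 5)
Your plan for the main upper bound does not go through, and the place where it breaks is not the step you flagged. You propose to show $c(\normalfont\mathsc{AdSch},I)\le \phi(m-t)\cdot R$, where $t=\lceil W_I/w_{\max,I}\rceil$ and $R$ is the smallest radius within which weight $W_I$ accumulates, and then finish with $opt(I)\ge R$. The first inequality is false: take $t$ targets of weight $w_{\max,I}$, one per ray, all at distance $R$, with $W_I=t\,w_{\max,I}$ and the remaining rays empty. Any strategy must visit all $t$ of them, so its cost is at least $(2t-1)R$, whereas $\phi(m-t)R$ equals $9R$ when $t=m-1$; for $m\ge 7$ the claimed bound is violated. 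The correct benchmark is $opt(I)=2\sum_{i\in S}d_i-\max_{i\in S}d_i$, which in this example is $(2t-1)R\gg R$; replacing it by the single radius $R$ throws away exactly the part of the offline cost that absorbs the cost of collecting the $t-1$ ``amortised'' targets. Beyond this, the core claim --- that the searcher pays only a $\phi(m-t)$ overhead over $opt(I)$ without knowing which rays carry the needed weight --- is precisely the content of Lemma~\ref{thm:SubsetSearchUpper} (the telescoping of search costs, the lower bounds~\eqref{eq:lowerDj} on discovered distances, and Lemma~\ref{lem:technicality}); you explicitly defer this as ``the main obstacle,'' so what you have is a plan rather than a proof. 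The same remark applies to your $t=m$ upper bound: the constant $3+2e$ is the extreme case of Lemma~\ref{thm:SubsetSearchUpper}, not something that falls out of a per-ray doubling heuristic.

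You are also missing the one small observation that makes the theorem an immediate corollary of the machinery already in place: since every target has weight at most $w_{\max,I}$, any feasible set $S$ satisfies $|S|\ge \lceil W_I/w_{\max,I}\rceil$, hence $s_I\ge \lceil W_I/w_{\max,I}\rceil$, and by monotonicity of $\phi$ the bound $\phi(m-s_I)\,opt(I)$ obtained from Lemma~\ref{thm:reduction} together with Lemma~\ref{thm:SubsetSearchUpper} already implies $\phi(m-\lceil W_I/w_{\max,I}\rceil)\,opt(I)$; the extreme case is Lemma~\ref{thm:spastiko.case}. Your lower bounds are acceptable in spirit: the ``$t-1$ shallow heavy targets plus one deep target'' instance is essentially the construction the paper uses for Theorem~\ref{thm:subsets} and works here too (the paper instead takes uniform weights and invokes the multi-target lower bound of~\cite{multi-target}), and your factor-$2$ argument is a looser version of the snapshot argument in Lemma~\ref{thm:spastiko.case}, which places a target at distance $l_i+\epsilon$ on every ray so that the searcher's already-spent cost $2\sum_i l_i$ is itself nearly $opt(I)$.
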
 
While this result is essentially tight with respect to the parameter ${W_I}/{w_{\max,I}}$, we can, obtain an even stronger performance guarantee. As we show in the next theorem, we can surprisingly relate the competitiveness of weighted search to the {\em maximum} number of targets that an ideal solution of optimal
cost must locate; in contrast, in Theorem~\ref{thm:weights}, the competitive ratio is  expressed in terms of the {\em minimum} number of targets in any ideal solution (more precisely: a lower bound on this number).
For a general instance $I \in {\cal I}_m$, let  $s_I$ denote the maximum value of this parameter, see Section~\ref{sec:prelim} for a formal definition. 
To illustrate this concept, consider the instance of Figure~\ref{fig:weighted}. There are two optimal solutions: 
one locating the target $(100,10)$, and one that locates all other targets. Thus, for this instance, we have that $s_I=3$.
\begin{theorem}
Let $I \in {\cal I}_m$. Then 
\[
c(\normalfont\mathsc{AdSch},I)\leq \phi(m-s_I) \cdot \opt(I), \text{ if } s_I<m,
\]
and  $c(\normalfont\mathsc{AdSch},I) \leq (3+2\textrm{e}) \opt(I)$ otherwise. 
Moreover, for every strategy $\Sigma$ and for every $s\in\{1, \ldots ,m-1\}$, there exists $I \in {\cal I}_m$ such that
\[
c(\Sigma,I)\geq \phi(m-s_I) \cdot \opt(I),
\] 
and there exists $I$ with $s_I=m$ such that $c(\Sigma,I)\geq (2-\epsilon) \opt(I)$, for arbitrarilly small $\epsilon>0$.
\label{thm:subsets}
\end{theorem}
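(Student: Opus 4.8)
The plan is to prove the two bounds separately, in each case routing through single-target search on a reduced number of rays, whose optimal competitive ratio is $\phi(m'-1)$ for $m'$ rays by (the single-target special case of) Theorem~\ref{thm:no.constraint}. The reduced count will be $m'=m-s_I+1$: intuitively, an optimal solution of cardinality $s_I$ pins down the targets on $s_I-1$ rays ``for free'', leaving only $m-s_I+1$ rays over which \AdSch \ can still be undecided during the decisive phase of its search. Note first that any feasible solution, and in particular a max-cardinality optimal one, uses at least $\lceil W_I/w_{\max,I}\rceil$ targets, so $s_I\ge\lceil W_I/w_{\max,I}\rceil$ and $\phi(m-s_I)\le\phi(m-\lceil W_I/w_{\max,I}\rceil)$; thus Theorem~\ref{thm:subsets} sharpens Theorem~\ref{thm:weights}, and the whole difficulty is to replace the crude count $\lceil W_I/w_{\max,I}\rceil$ by the structural parameter $s_I$.

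For the upper bound, fix a max-cardinality optimal solution $S^*$, so $|S^*|=s_I$, and recall that the offline cost of collecting targets on distinct rays is twice the sum of their depths minus the largest depth; hence $opt(I)=2\sum_{i\in S^*}d_i-D$ with $D=\max_{i\in S^*}d_i$. I would then follow the execution of \AdSch \ up to the probe on which it collects its final target, at depth $d^*$ on some ray $r^*$, and split its total cost into the productive part it shares with $opt(I)$ on already-located targets and the searching overhead. Since \AdSch \ explores the active rays in round-robin to geometrically increasing depths, the overhead is a geometric series in $d^*$ whose length is governed by the number $a^*$ of rays active during the decisive phase; this is exactly the single-target $a^*$-ray search bound, and combined with the shared productive cost (which only lowers the ratio) it yields $c(\mathsc{AdSch},I)\le\phi(a^*-1)\,opt(I)$. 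The crux is the claim $a^*\le m-s_I+1$: when \AdSch \ reaches depth $d^*$ on $r^*$, round-robin forces every other active ray to have been explored to a comparable depth one round earlier, so once $d^*$ exceeds the second-largest depth of $S^*$, all but the deepest target of $S^*$ is already found and its ray deactivated. The special case $s_I=m$ must be handled directly, since then $m-s_I+1=1$ and the last ray needs no hedging: every target is necessary, \AdSch \ must locate all $m$, and summing the round-trip overheads across the phases—bounding each factor $(1+1/k)^{k}$ by $e$—yields the stated constant $3+2e$.

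For the lower bounds, fix an arbitrary strategy $\Sigma$ and a value $s\in\{1,\dots,m-1\}$. I would place $s-1$ trivial targets of weight $w$ at distance $1$ on $s-1$ fixed rays and hide a single further target of weight $w$, adversarially, on one of the remaining $m-s+1$ rays, with $W=s\,w$ so that every feasible solution uses exactly $s$ targets and $s_I=s$. Applying the lower bound of Theorem~\ref{thm:no.constraint} to this $(m-s+1)$-ray sub-instance, $\Sigma$ is forced to spend at least $\phi(m-s)$ times the distance to the hidden target before locating it; as that distance grows the trivial targets contribute a vanishing fraction of both costs, so the ratio approaches—and by the inherited tightness attains—$\phi(m-s_I)$. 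For $s_I=m$ I would instead construct an instance in which all $m$ targets are necessary and an adaptive adversary keeps two deep rays indistinguishable, so that $\Sigma$ is compelled to complete a full round trip on one of them while the optimal solution traverses it only one-way, driving the ratio to $2$.

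The step I expect to be hardest is the active-ray bound together with its early-termination regime: \AdSch \ may well finish with strictly fewer than $s_I$ targets, so that a naive count leaves more than $m-s_I+1$ rays active, and one must still certify a competitive ratio of only $\phi(m-s_I)$ by balancing the reduced depth $d^*$ against the surplus active rays. Closely related is the need for exact rather than asymptotic bookkeeping of the geometric series, so as to recover the precise value $\phi(m-s_I)$—and the matching constant $3+2e$ when $s_I=m$—rather than a limiting estimate.
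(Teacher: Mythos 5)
Your lower-bound constructions are essentially the paper's own: for $s<m$ the instance with $s-1$ cheap positive-weight targets reducing the problem to single-target search on $m-s+1$ rays is exactly what the paper uses, and your adversary argument for $s_I=m$ is a variant of the paper's snapshot argument (which in fact only yields $(2-\epsilon)\,opt(I)$). The observation that $s_I\ge\lceil W_I/w_{\max,I}\rceil$, so that this theorem refines Theorem~\ref{thm:weights}, is also correct.

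The upper bound, however, has a genuine gap, and it is larger than the ``hardest step'' you flag. The paper does not analyze \AdSch{} as a single-target search on $m-s_I+1$ ``active'' rays; it first reduces to the subset-search problem (Lemma~\ref{thm:reduction}) and then proves Lemma~\ref{thm:SubsetSearchUpper} by a global accounting over \emph{all} phases. The reason your reduction to a ``decisive phase'' fails is twofold. First, the strategy discovers targets that are \emph{not} in the optimal set $S^*$; each such discovery deactivates a ray and increases the base from $b_{m,j}$ to $b_{m,j+1}$, so the exploration depths are products of powers of \emph{different} bases, not a single geometric series. The ray on which the last $S^*$-target sits may have been last probed many phases earlier, under a much smaller base, so the relation between the final probe depth and $D_t$ (hence the lower bound on $d_S$) involves the whole history; controlling the resulting expression is exactly the content of Lemma~\ref{lem:technicality} and the induction on $t-t'$ in the paper, for which your proposal has no substitute. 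Second, the round-trip costs $2D_j$ of the spurious targets and the costs of the intermediate phases must be charged somewhere: the paper shows each spurious target found ``early'' contributes only an additive $2$, and absorbs these using the gap $\phi(q)-\phi(q-1)\ge 2e$ from Lemma~\ref{lem:boundsf}. Your claim that the shared productive cost ``only lowers the ratio'' does not address these terms, since the spurious targets contribute to the cost of \AdSch{} but not to $opt(I)$. Without these two mechanisms the bound $\phi(a^*-1)\le\phi(m-s_I)$ does not follow, and the exact constant $\phi(m-s_I)$ (rather than some $O(\phi(m-s_I))$) is out of reach.
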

The extreme cases $W_I/w_{\max,I}=m$ and $s_I=m$ in the statements of Theorems~\ref{thm:weights} and~\ref{thm:subsets} respectively, 
describe the outlier cases in which the optimal strategy must locate all $m$ targets. In this case, a small additive gap of at most
$(1+2\textrm{e}) \cdot \opt(I)$ remains.


It is worth noting that a conceptually related parameterized analysis has been applied to standard ray searching for a single target. 
For example, even though the strategy of~\cite{gal:minimax} achieves the best competitive ratio, when 
parameterizing with respect to upper and/or lower bounds on the distance of the target, more refined results can be 
obtained (as in~\cite{ultimate,revisiting:esa,jaillet:online}). 


\subsection{Techniques and analysis}   
In this work we rely on a strategy that searches the rays in a {\em cyclic}, i.e., round-robin manner; this is motivated by the fact that, for
the special case of the $m$-ray search problem,~\cite{Gal80} showed that any optimal search strategy must be cyclic. 
There are two main challenges that one needs to address. The first challenge pertains to the design of the strategy itself, 
and more precisely in determining the appropriate sequence of search lengths. 
Unlike most previous work in which for an optimal strategy it suffices to increase the search lengths by the same factor at each step, in our setting 
we show later that such simple rules are very inefficient. We thus introduce an {\em adaptive} strategy in which the search lengths increase depending on the total number of targets found by the end of each step. The second challenge lies in analyzing this strategy, since the setting is substantially more complex than the unweighted one. 
To this end, we relate the competitiveness of the weighted search problem to that of a related 
problem which we term {\em subset search}.
The objective in the latter problem is, informally, to locate a certain subset of the targets, without advance knowledge of the specific subset. 
For this problem, which is of independent interest, we show strategies that achieve the same competitiveness as strategies that 
search for {\em any} subset of targets that has the same cardinality as the subset that is sought. This, perhaps surprising, result 
establishes a strong upper bound on the competitiveness of the problem and is the main technical challenge. 

To arrive at this result, we first prove essentially tight bounds on the optimal cost, by identifying appropriate parameters related to the search, and by relaying the cost of our strategy to precisely those parameters. We then apply an inductive argument on the number of targets discovered by the strategy in the last $m$ rays that have been explored by the searcher. Some technical difficulties arise from the fact that, unlike the single-target variant, there is no simple expression that describes the optimal cost, and that the cost of the searcher involves several parameters in a trade-off relation. 

\subsection{Further related work}
\label{subsec:related.work}


The $m$-ray search problem, also known as \emph{star search}, is a generalization of the {\em linear search} problem and has been studied in many variants and settings. It is known that any optimal search strategy must be defined as a geometric sequence of search lengths (\cite{gal:general, Gal:exponential}). We refer the reader to  Chapters 8 and 9 in the book~\cite{searchgames} for results related to search games on a star. 
Optimal \emph{randomized} strategies for linear search ($m=2$) were given in~\cite{ray:2randomized}, and some results concerning the value of the star search games (assuming general mixed strategies) were shown in~\cite{hybrid,schuierer:randomized,Kella:star.search}. The setting in which multiple searchers explore the star was studied in~\cite{alex:robots}, whereas 
multi-target searching was studied in~\cite{hyperbolic,oil,multi-target}. Other variants include:
searching with turn cost (\cite{demaine:turn, Angelopoulos2017}); searching with an upper bound on the target distance (\cite{HIKL99:fixed.distance,revisiting:esa}); the variant in which some probabilistic information on target placement is known (\cite{jaillet:online, informed.cows}); 
and expanding search (\cite{stacs-expanding} and~\cite{ejor:competitive.ratio}).

\section{Notation \& $m$-ray search}
\label{sec:prelim}

We begin with some useful definitions and facts concerning the problem we study. Given an instance $I \in {\cal I}_m$ of 
{\sc WeightedSearch} ({\WS}), we denote by $T_I = \{0, \dots, m-1\}$ the set of the $m$ targets of $I$. 
Given $S\subseteq T_I$, we define the {\em total weight of $S$}, denoted by $w_S$ as 
$\sum_{i\in S} w_i$, and the {\em optimal search cost of $S$}, denoted by $d_S$ as 
\[ 
d_S=2\sum_{i\in S}d_i-\max_{i\in S}d_i.
\] 
The latter is the minimum cost required by the mobile searcher initially located at the origin to visit all targets in $S$ assuming full information of $I$ (note that the ray with the most distant
target among targets in $S$ is visited last, and that the searcher does not need to return to the origin after having located all targets).  With this notation the optimal cost of the instance $I$ is 
\[
\opt(I)=\min_{S \subseteq T_I} \{d_S:w_S\geq W\}.
\]
Moreover, we denote by $s_I$ the largest number of targets in an optimal solution, that is,
\[ s_I=\max \big\{|S| : S\subseteq T_I, w_S \geq W, \text{and } d_S= \opt(I)\big\}.
\] 
The following quantities will appear several times in the forthcoming analysis. Define
\begin{equation}
\label{eq:defb}
	b_x = 1 + \frac1x
	\quad \text{and} \quad
	b_{m, t} = b_{m-t} = 1 + \frac{1}{m-t}.
\end{equation}
Note that for the function $\phi$ defined in the introduction we obtain the useful relation
\[
\phi(x)=1+2b_{x}^{x+1}/(b_{x}-1).
\]
Let us now turn our attention to a special case of weighted search considered in the introduction, namely the classical $m$-ray search problem in which
 there is exactly one target to be found. 
For this problem, it was shown in~\cite{gal:minimax}  that there exists an optimal cyclic strategy with geometrically increasing lengths. More specifically, in the $i$-th step, the strategy 
searches ray $i \pmod m$ up to length $b_{m,1}^i$. As already mentioned, the (optimal) competitive ratio equals
\[
1+2\frac{b_{m,1}^m}{b_{m,1}-1}=\phi(m-1). 
\]
Concerning the problem in which there are $m$ unweighted targets, one per ray, and the searcher seeks to locate any subset of exactly $t$ targets, 
it was shown in~\cite{multi-target} that there is an optimal cyclic strategy, which in the $i$-th step searches the corresponding ray up to length 
$b_{m-t+1,1}^i$. The optimal competitive ratio is shown to be equal to 
\[
1+2\frac{b_{m-t+1,1}^{m-t+1}}{b_{m-t+1,1}-1}=\phi(m-t),
\] 
namely the same as the 
competitive ratio of searching for a single target in a star consisting of $m-t+1$ rays.  Note that in the unweighted setting, the searcher benefits from knowing the number $t$ of targets that are sought. This allows~\cite{multi-target} to give an non-adaptive optimal strategy, i.e., one in which in the $i$-th step the searcher goes up to distance $b^i$ from $O$, for some appropriate $b$. As we will argue in the next section, this type of strategy cannot be efficient for weighted search, because the ``right'' number of targets that needs to be located is a parameter that is unknown to the searcher.

\section{Strategies and the Overall Approach}
\label{sec:algorithm}
In this section we present the strategy for weighted star search, and the main approach to the analysis, namely the relation between the competitiveness
of {\WS} and {\sc SubsetSearch} ({\SS}).  We formally define the latter problem as follows. The instance to the problem 
consists of $m$ unweighted targets (one per ray) with the target at ray $i$ being at distance $d_i$ from the origin of the star, as well as a subset $S \subseteq T_I$ of the targets. The distances, as well as the subset $S$ are not known to the searcher. The search terminates when all targets in 
$S$ have been discovered; we can assume the presence of an oracle that announces this event to the searcher and thus prompts the termination. 
The cost of the search is defined as the total cost incurred at termination, whereas the cost of the optimal solution to the instance 
is the cost for locating all targets in $S$ assuming full information of the instance, i.e., equal to $d_S$.
The following lemma establishes a useful association between the two problems.
\begin{lemma}
Suppose there is a strategy $\Sigma_s$ for {\SS} such that for any instance $I_s=(S,d_0, \ldots d_{m-1})$ we have that
$c(\Sigma_s,I_s) \leq \rho(|S|,m)\cdot d_S$, where $\rho(|S|,m)$ is a function of $|S|,m$. Then there is a strategy $\Sigma_w$ for {\WS}
such that for any instance $I_w$ of this problem, $c(\Sigma_w,I_w) \leq \rho(s_I,m) \cdot \opt(I_w)$. 
\label{thm:reduction}
\end{lemma}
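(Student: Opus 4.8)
The plan is to turn the given \SubS\ strategy into a \WS\ strategy by keeping its movement rule verbatim and changing only the stopping condition, and then to bound the resulting cost by coupling its trajectory against a hypothetical \SubS\ execution that targets a best possible subset. Fix a \WS\ instance $I_w$ and let $S^*\subseteq T_{I_w}$ be a subset of maximum cardinality among those with $w_{S^*}\geq W$ and $d_{S^*}=opt(I_w)$; such an $S^*$ exists by the definitions of $opt(\cdot)$ and $s_I$, and satisfies $|S^*|=s_I$. I would then define $\Sigma_w$ as follows: at every return to the origin it issues exactly the pair $(r,l_r)$ dictated by $\Sigma_s$, whose decision is a function only of the positions of the targets discovered so far (and \emph{not} of the hidden subset, which $\Sigma_s$ never learns, since its oracle announces only the termination event); and $\Sigma_w$ halts as soon as the aggregate weight of the discovered targets reaches $W$.

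The key step is a trajectory-coupling argument. Consider the \SubS\ instance $I_s=(S^*,d_0,\ldots,d_{m-1})$, whose distances are inherited from $I_w$. Since the movement rule of $\Sigma_s$ reads only the positions of already-discovered targets, and these positions are identical in $I_w$ and $I_s$, the path traced by $\Sigma_w$ on $I_w$ and the path traced by $\Sigma_s$ on $I_s$ coincide up to the first time either one stops. By hypothesis $\Sigma_s$ locates all of $S^*$ on $I_s$ at cost at most $\rho(|S^*|,m)\,d_{S^*}=\rho(s_I,m)\,opt(I_w)$. At that instant every target of $S^*$ has been found along the common path, so the discovered weight is at least $w_{S^*}\geq W$, and hence $\Sigma_w$ has already terminated. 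Consequently $c(\Sigma_w,I_w)\leq c(\Sigma_s,I_s)\leq \rho(s_I,m)\,opt(I_w)$, which is the claimed bound. (If $\Sigma_w$ happens to reach weight $W$ earlier, from other targets, its cost is only smaller, so the inequality is unaffected.)

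The main obstacle — and the reason the coupling is legitimate — is the subset-independence of the moves of $\Sigma_s$: the reduction never requires $\Sigma_w$ to know $S^*$, which enters purely as an analysis device, precisely because the single strategy $\Sigma_s$ handles \emph{all} subsets simultaneously while its trajectory does not reveal which subset is sought. I would also stress why maximum cardinality is the right choice for $S^*$: any optimal subset $S'$ with $d_{S'}=opt(I_w)$ would yield the weaker-looking bound $\rho(|S'|,m)$, and taking $|S'|=s_I$ is what pins the guarantee to exactly $\rho(s_I,m)$; for the concrete instantiation $\rho(x,m)=\phi(m-x)$ this is moreover the smallest such bound, since $\phi$ is increasing.
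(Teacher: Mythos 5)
Your proof is correct and follows essentially the same reduction as the paper: simulate $\Sigma_s$ on the weighted instance, stop once weight $W$ is amassed, and bound the cost by the run of $\Sigma_s$ on the \SubS\ instance whose hidden subset is a maximum-cardinality optimal set $S^*$ with $|S^*|=s_I$. Your explicit coupling argument and the observation that $\Sigma_s$'s moves cannot depend on the hidden subset (since the oracle only announces termination) make precise a point the paper leaves implicit, but the underlying idea is identical.
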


\medskip
\begin{proof}{Proof.}
Let $I_w=\{W,(d_0,w_0), \ldots ,(d_{m-1},w_{m-1})\}$ denote an instance of {\WS}.
Consider the instance $I_s$ for {\SS} in which
$m$ targets are at distances $d_0, \ldots d_{m-1}$, and $S$ is defined to be any subset of targets such that 
$d_S = \opt(I_w)$ and $|S|=s_I$. 
From the definition of $s_I$, such a set exists. 
Define  $\Sigma_w$ for {\WS} as follows: $\Sigma_w$ executes $\Sigma_s$
until an aggregate target weight of at least $W$ has been located. Since 
$w_S\geq W$, it follows that $c(\Sigma_w,I_w)\leq c(\Sigma_s,I_S)$. Furthermore, since $\Sigma_S$ has competitive ratio $\rho(s_I,m)$, we have that 
$c(\Sigma_s,I_S)\leq \rho(s_I,m) d_S$. 
Thus, $c(\Sigma_w,I_w)\leq \rho(s_I,m) \opt(I_w)$.
\end{proof}
\bigskip

Lemma~\ref{thm:reduction} demonstrates that the competitiveness of {\SS} is directly related to that of  {\WS}. 
We thus focus on {\SS}; in particular, we will analyze a strategy which we call {\sc AdaptiveSubset} ({\AdSub} for brevity),
and which is described by the statement of Algorithm~\ref{alg:exponential.deterministic}. The strategy explores rays in cyclic order 
(line 13), and keeps track of the found targets, denoted by $f$ (line 11). 
On each iteration, the strategy will search a ray to a length equal to $b_{m,f}$ times the length of the last ray exploration that did not 
reveal a target (line 8), until a new target is discovered. 
Once a target is found on a ray, the ray is not considered again; namely, the remaining rays are relabeled so as to
maintain a cyclic order (line 12). 

\SetNlSkip{0.8em}
\IncMargin{1.5em}
\begin{algorithm}[htb!]
\DontPrintSemicolon
\caption{Strategy \underline{{\AdSub}} for {\SS}}
\label{alg:exponential.deterministic}

{\bf Input:} $m$ rays labeled $\{0,\dots, m-1\}$, subset oracle ${\cal O}$\;

$f \leftarrow 1$, $r \leftarrow 0$, $D \leftarrow 1$

\Repeat {all targets, according to ${\cal O}$,  were found}
{
\Repeat{a target was found}
{
explore the $r$th ray up to distance $D\cdot b_{m,f}$ or until a target is found\;
\If{no target was found}
{
$r \leftarrow r + 1 ~ (\bmod~m-f+1)$ \;
$D \leftarrow D \cdot b_{m,f}$ \;
}
}
$f \leftarrow f + 1$ \;
remove the $r$th ray and relabel the rays canonically from $0\dots m-f$ \;
$r \leftarrow r  ~ (\bmod~m-f+1)$ \;
}
\end{algorithm}
\DecMargin{1.5em}

Before we proceed with the analysis, it is instructive to point out that the need for an adaptive strategy that modifies the search lengths as a function of the number of targets that have been found seems to be, in a sense, unavoidable. To see this, consider first a 
cyclic strategy that uses geometrically increasing lengths with a fixed base that may depend only on $m$. Consider also an instance in which $|S|-1$ targets from $S$ are very close to the origin, then the competitive ratio of this strategy is (more or less) at least the competitive ratio of a strategy that searches one target in $m-|S|+1$ rays; crucially, this ratio can only be achieved if the base equals $b_{m-|S|+1,1}$; see~Section~\ref{sec:prelim}.  Since $|S|$ is not known in advance, the wrong choice of a base can have a detrimental effect on performance.

The above example demonstrates the need for an adaptive  strategy that modifies the search lengths as a function of the number of targets that have been found. However, it is not entirely clear how to adapt the search lengths. To demonstrate this,
let us consider an obvious candidate, which also turns out to be inefficient. Consider a cyclic, geometric strategy that changes the base of search lengths once a target is found; more precisely, a strategy that on the $i$-th step searches the corresponding ray up to depth $b_{m,t}^i$, where $t-1$ represents the number of targets found at the beginning of the step. This strategy has an unbounded competitive ratio. To see this, suppose that $|S|=1$, and that in iteration   $i$, the strategy finds on ray $r$ a target that is not in $S$. Suppose also that the unique target in $S$ lies on ray $(r-1) \bmod m$ and  at distance $b_{m,1}^{i-1}+\epsilon$, for some $\epsilon>0$. The strategy will discover this target after having spent cost at least $2b_{m,2}^{i+m-2}$ (namely, the cost for searching ray $(r-2) \bmod m)$ on iteration $m+i-2$). It follows that the competitive ratio is at least $(b_{m,2}/b_{m,1})^i$, which is unbounded since $i$ can be arbitrarily large. 

By combining the insights from both previous examples, we see that not only we have to adapt the search lengths, but we also have to ensure a smooth transition when such a modification takes place. This is precisely accomplished in {\AdSub}; when a target is found, say in iteration $i$, the search depths in the following rounds do not jump `abruptly' from $b^{i+j}$ to some $\tilde{b}^{i+j}$, but instead to $b^i \cdot \tilde{b}^j$, $j \ge 0$.
Concerning weighted search, Lemma~\ref{thm:reduction} guides our choice of strategy. Namely, we
denote by {\AdSch} the strategy for {\WS} obtained by considering $\Sigma_s$ to be {\sc AdSub} 
in the statement and proof of Lemma~\ref{thm:reduction}. 

The following lemma, which bounds the competitive ratio of \AdSub, is the main technical result of this work. 
Its proof is given in Section~\ref{sec:analysis}. 
\begin{lemma}
Let $m \geq 2$. Given an instance $I$ of {\SS} in which we seek a set $S\subseteq \{0, \ldots ,m-1\}$, 
with $S \neq \emptyset$, we have 
\begin{equation}
c(\normalsize\mathsc{AdSub},I) \leq \phi(m-|S|) \cdot d_S, \ \text{ if }\ |S| \leq m-1,
\label{eq:upperBoundSubset}
\end{equation}
and $c(\normalsize\mathsc{AdSub},I) \leq (3+2e)\cdot d_S$, \ \text{ if } $|S|=m$.
\label{thm:SubsetSearchUpper}
\end{lemma}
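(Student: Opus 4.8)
The plan is to bound the cost of \AdSub{} by reducing the analysis to a single clean ``single-target'' estimate over a reduced number of rays, and then matching it against the geometric series that defines $\phi$. First I would write the total cost of \AdSub{} on $I$ as a sum of round-trips: the strategy cycles through the rays that are still \emph{open} (those on which no target has yet been located) and, once $t$ targets have been discovered, advances the search depth by the per-step factor $b_{m-t-1}$. The crucial structural fact I would isolate is that this factor coincides, at every moment, with the \emph{optimal} base $b_{k-1}$ for single-target search among the current number $k=m-t$ of open rays (cf.\ the single-target bound $\phi(k-1)$ recalled in Section~\ref{sec:prelim}). I would then record $c(\mathsc{AdSub},I)$ as twice a telescoping geometric sum of the completed search depths plus the final one-way leg to the last target of $S$; the $-\max_{i\in S}d_i$ term in $d_S$ is matched exactly by this final one-way leg when the deepest target of $S$ is discovered last.

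The core step is a worst-case reduction showing that the identity of $S$ is irrelevant and only $|S|=s$ matters. I would prove, by an exchange argument, that relocating a target \emph{not} in $S$ to infinity (so that its ray never closes) does not decrease the ratio $c(\mathsc{AdSub},I)/d_S$: discovering an off-$S$ target merely removes a ray from the rotation and raises the base, and since the cost of each phase is governed by the locally optimal single-target estimate $\phi(m-t-1)$, which is increasing in the number $m-t$ of open rays (as $\phi$ is increasing), keeping more rays open can only enlarge the ratio. After this reduction the $m-s$ rays carrying off-$S$ targets stay open throughout, so at the instant the last target of $S$ is found exactly $t=s-1$ targets have been located, there are $m-s+1$ open rays, and the active base is $b_{(m-s+1)-1}=b_{m-s}$.

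A second exchange argument pushes the $s-1$ shallow targets of $S$ toward the origin, so that $d_S\to\max_{i\in S}d_i$ and the instance degenerates to single-target search among $m-s+1$ rays (the deep $S$-target being sought among the $m-s$ persistently open off-$S$ rays plus its own). Bounding the telescoping geometric sum against this reduced single-target instance with base $b_{m-s}$ yields $c(\mathsc{AdSub},I)\le(1+2 b_{m-s}^{m-s+1}/(b_{m-s}-1))\,d_S=\phi(m-s)\,d_S$, which is the claimed inequality for $s\le m-1$. The boundary case $|S|=m$ must be handled separately, because the final phase would formally require the base $b_{0}$: here all $m$ targets must be found, so once $m-1$ of them are known the last open ray is searched directly and no geometric blow-up occurs, and the cost is the sum of the per-phase contributions with bases $b_{m-1},b_{m-2},\dots,b_{1}$ plus this direct leg. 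Bounding each factor $(1+1/x)^{x}\le e$ and summing the resulting series gives the additive-type constant, i.e.\ $c(\mathsc{AdSub},I)\le(3+2e)\,d_S$.

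The main obstacle is the first exchange argument. Because discovering an off-$S$ target alters the \emph{entire} subsequent schedule of bases and search depths, one cannot simply ``delete a ray''; the two executions (with the target at a finite distance versus at infinity) must be compared step by step to certify that the accumulated cost over \emph{all} phases, and not merely the final-phase estimate sketched above, is monotone. This monotonicity is precisely the statement that \SS{} with a prescribed set $S$ is no harder than searching for an arbitrary set of cardinality $|S|$, and it is where I expect the bulk of the technical work to lie.
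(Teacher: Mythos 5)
Your worst-case picture is the right one (off-$S$ targets effectively at infinity, the shallow members of $S$ effectively at the origin, leaving single-target search on $m-|S|+1$ rays with base $b_{m-|S|}$), and your identification of the per-phase base $b_{m,f}$ with the optimal single-target base for the current number of open rays is correct. But the proposal has a genuine gap at exactly the point you flag: the first exchange argument. You assert that moving an off-$S$ target to infinity cannot decrease the ratio $c(\mathsc{AdSub},I)/d_S$, justify it only by the monotonicity of $\phi$, and then defer the step-by-step comparison of the two executions to future work. That comparison is not a routine verification: discovering an off-$S$ target at distance $D_j$ costs an extra $2D_j$ \emph{and} changes every subsequent search depth (the base jumps from $b_{m,j}$ to $b_{m,j+1}$ and one ray leaves the rotation), so the two executions diverge globally and there is no obvious coupling under which the cost is pointwise monotone. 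The second exchange (pushing the shallow $S$-targets to the origin) has the same difficulty, compounded by the fact that it changes the denominator $d_S$ as well. Since the entire reduction to the clean single-target estimate rests on these two unproven monotonicity claims, the proposal as written does not constitute a proof.

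The paper avoids exchange arguments altogether. It writes the cost as a telescoping sum~\eqref{eq:upperSubset}, lower-bounds each discovered distance via the last unsuccessful visit to its ray~\eqref{eq:lowerDj}, and then splits the discovered targets into $J$ (those in $S$) and $\overline{J}$ (those not in $S$). The $J$-terms are charged \emph{multiplicatively} against $d_S \ge \sum_{j\in J}D_j + D_t$, each contributing a ratio $(1+\gamma_j)/\gamma_j \le 1+e < 4$; the $\overline{J}$-terms are collected into the quantity $H$ of~\eqref{eq:H} and bounded by $b_q^{q+1}/(b_q-1)$ with $q=m-|S|$ via an induction on $t-t'$. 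Two quantitative facts carry the whole argument and have no counterpart in your plan: Lemma~\ref{lem:boundsf}, i.e.\ $\phi(q)-\phi(q-1)\ge 2e$, which is what certifies that the additive contribution of each off-$S$ target is absorbed when $m-t$ is replaced by $m-|S|$; and Lemma~\ref{lem:technicality}, the convexity bound $h_{q,\ell}\le 1$, which controls the change of base across a phase boundary. If you want to salvage your route, you would need to prove the execution-coupling monotonicity lemma in full; otherwise you will end up needing estimates of exactly this kind. (A small additional imprecision: the last target of $S$ found by the strategy need not be the deepest one, so the final one-way leg does not ``match exactly'' the term $-\max_{i\in S}d_i$; the paper sidesteps this by only using $d_S\ge\sum_{i\in S}d_i$.)
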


Assuming Lemma~\ref{thm:SubsetSearchUpper}, we show first how to obtain Theorem~\ref{thm:no.constraint}. This establishes a tight, albeit worst-case 
bound on the competitive ratio. 

\medskip
\begin{proof}{Proof of Theorem~\ref{thm:no.constraint}.}
From Lemmas~\ref{thm:reduction} and~\ref{thm:SubsetSearchUpper}, for all $I \in {\cal I}_m$, 
\[
c(\normalsize\mathsc{AdSch},I) \leq \max\{3+2e,\phi(m-1)\} \cdot \opt(I),
\]
since $\phi$ is increasing. Moreover, $\phi(1)=9>3+2e$, hence the upper bound follows. 
This bound is tight for the instance $I$ in which there is only one target of weight $w$, and $W_I=w$
(i.e., standard star search for a single target in~\cite{gal:minimax}). 
\end{proof}
\bigskip

In order to prove Theorems~\ref{thm:weights} and~\ref{thm:subsets}, we will need the following lemma 
that addresses the extreme cases.
\begin{lemma}
Given an instance $I$ of {\WS} for which $s_I=m$ or $W/w_{\max,I}=m$, we have $c(\normalsize\mathsc{AdSch},I) \leq (3+2e) \opt(I)$.
Moreover, for every strategy $\Sigma$, there exists $I$ such that $c({\Sigma},I) \geq (2-\epsilon) \opt(I)$, 
for arbitrarily small $\epsilon$.
\label{thm:spastiko.case}
\end{lemma}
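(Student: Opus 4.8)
The plan is to handle the two claims—the upper bound and the matching lower bound—separately, and to reduce both hypotheses on $I$ to the single structural statement that the ideal solution must visit all $m$ targets.

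For the upper bound, I would first argue that each hypothesis forces $s_I=m$. If $W/w_{\max,I}=m$, then feasibility gives $\sum_i w_i\geq W=m\,w_{\max,I}$, while $w_i\leq w_{\max,I}$ for every $i$ forces $\sum_i w_i\leq m\,w_{\max,I}$; hence every weight equals $w_{\max,I}>0$ and $W=\sum_i w_i$. Consequently no proper subset of $T_I$ is feasible, so the only set achieving weight at least $W$ is $T_I$ itself, giving $s_I=|T_I|=m$. Thus in both cases $s_I=m$, and the subset $S$ selected in the reduction of Lemma~\ref{thm:reduction} satisfies $|S|=s_I=m$. Applying the $|S|=m$ case of Lemma~\ref{thm:SubsetSearchUpper} to $\mathsc{AdSub}$ on the associated instance $I_s$ yields $c(\mathsc{AdSub},I_s)\leq(3+2e)\,d_S$, and the reduction then gives $c(\mathsc{AdSch},I)\leq c(\mathsc{AdSub},I_s)\leq(3+2e)\,d_S=(3+2e)\,opt(I)$, as required.

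For the lower bound, I would exhibit, for an arbitrary deterministic strategy $\Sigma$, an instance $I\in{\cal I}_m$ with all $m$ weights equal to $1$ and $W=m$ (so that simultaneously $s_I=m$ and $W/w_{\max,I}=m$) forcing $c(\Sigma,I)\geq(2-\epsilon)\,opt(I)$. The construction places $m-2$ ``cheap'' targets at distance $1$ and two ``expensive'' targets at distance about $d$ on two distinct rays, chosen adaptively against $\Sigma$. The key feature of the extreme case is that once the searcher has located all targets but one, it knows the ray of the last target and reaches it by a single direct excursion at no competitive overhead; hence all of $\Sigma$'s avoidable waste is concentrated in locating the \emph{first} expensive target. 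Running $\Sigma$ on the instance in which nothing is discovered, I would record its sequence of excursions and place the two expensive targets just beyond $\Sigma$'s deepest probes on their respective rays, arranging that the first expensive target is found only after $\Sigma$ has already explored the \emph{other} expensive ray to a comparable depth—exploration that is entirely wasted, since that target is ultimately reached for free. Letting $d\to\infty$ makes the cheap targets' contribution of $O(m)$ negligible, so the ratio is governed by this two-ray core.

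The remaining step is the calculation for the two-ray core. Writing $d$ for the (common, up to a vanishing perturbation) distance of the two expensive targets, the ideal cost is $d_{T_I}=3d+O(m)$, while $\Sigma$ must reach depth $d$ on the first expensive ray, return to the origin, and then traverse the second ray to depth $d$, paying at least $3d$ even with no waste; the adversarial placement just beyond $\Sigma$'s probes then adds the full cost of $\Sigma$'s exploration of the ``free'' ray, and a direct summation shows the total is at least $(2-\epsilon)\,opt(I)$ for $d$ large and any fixed $\Sigma$. I expect the main obstacle to be making this bound uniform over \emph{all} deterministic strategies, rather than only round-robin or geometric ones: this requires charging $\Sigma$'s wasted back-and-forth travel on the free ray against $opt(I)$ purely from the record depths of its excursion sequence, independently of how those excursions are scheduled, which is precisely where the adaptive ``just-beyond-the-deepest-probe'' placement does the work.
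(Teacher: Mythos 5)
Your upper bound argument is correct and is essentially the paper's: both hypotheses reduce to $s_I=m$ (your observation that $W/w_{\max,I}=m$ forces every weight to equal $w_{\max,I}$, hence $s_I=m$, is a detail the paper leaves implicit), and the bound then follows from Lemmas~\ref{thm:reduction} and~\ref{thm:SubsetSearchUpper} with $|S|=m$.

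The lower bound, however, has a genuine gap in the accounting. Take your construction at face value: both expensive targets at depth $d$, so $opt(I)=2d+2d-d+O(m)=3d+O(m)$. You charge $\Sigma$ with $3d$ of ``unavoidable'' cost plus the $2d$ round trip wasted on the free ray, for a total of $5d$; but $5d/(3d)=5/3<2$, so the ``direct summation'' does not reach $2-\epsilon$. What is missing is the waste on the \emph{first} expensive ray: if that target sits just beyond $\Sigma$'s deepest probe, then $\Sigma$ has already paid a fruitless round trip to depth $d$ on that ray too, and the total becomes at least $2d+2d+3d=7d$, i.e.\ ratio $7/3\ge 2$. More generally, freeze at a moment when $\Sigma$ is at the origin having explored the two designated rays to depths $a$ and $b$, and place the targets at $a+\epsilon$ and $b+\epsilon$: the cost already incurred is at least $2(a+b)$, the cost still needed is at least $a+b+\min(a,b)+O(\epsilon)$, and $opt(I)\approx a+b+\min(a,b)$; since $2(a+b)\ge a+b+\min(a,b)$ the ratio is at least $2-o(1)$ for $a+b$ large relative to $m$. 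This is precisely the paper's argument, applied to all $m$ rays rather than two: snapshot the strategy at a return to the origin, let $l_i$ be the depth explored on ray $i$, place unit-weight targets at $l_i+\epsilon$ with $W=m$; then $c(\Sigma,I)\ge 2\sum_i l_i + opt(I)$ while $2\sum_i l_i\ge opt(I)-O(m\epsilon)$. A secondary problem with your version is the premise that the adversary can arrange for the two expensive rays to have been explored to \emph{comparable} depth before the first discovery: a strategy may commit to one ray and probe the other only to depth $1$, so $a$ and $b$ must be taken as whatever the snapshot gives; the charging scheme above handles arbitrary $a,b\ge 1$, but your write-up relies on $a\approx b\approx d$.
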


\medskip
\proof{Proof.}
The upper bound follows directly from the upper bounds of Lemmas~\ref{thm:reduction} and~\ref{thm:SubsetSearchUpper}.  
Remains thus to show the second part of the lemma.
To this end, consider any strategy $\Sigma$ that has to locate all $m$ targets (as stipulated by the assumption). Consider any snapshot
of the execution of the strategy at the moment the searcher returns to the origin, and let $l_i$ denote the depth at which ray $i$ has
been searched. Consider an instance $I$ in which $m$ targets of unit weights are placed such that there is a target at distance $l_i+\epsilon$,
for arbitrarily small $\epsilon$, and $W_I=m$. Then $c(\Sigma,I)\geq 2\sum_{i=0}^{m-1}l_i+\opt(I)$, and for sufficiently small $\epsilon$, we
have that $2\sum_{i=0}^{m-1} l_i \geq (1+\epsilon)\opt(I)$, which yields the result. 
\endproof
\bigskip
Having established bounds for the extreme cases, we can now proceed with the proofs of Theorems~\ref{thm:weights} and~\ref{thm:subsets}.
Recall that these theorems establish our main parametric results on the competitive ratio of weighted search.

\medskip
\begin{proof}{Proof of Theorem~\ref{thm:weights}.}
The upper bound in the general case follows from Lemmas~\ref{thm:reduction} and~\ref{thm:SubsetSearchUpper}. 
For the lower bound (the general case), consider an instance $I$ in which all targets have weight $w>0$, 
and $W=tw$, for some $t\in {\mathbb R}^+$. Then any strategy must locate $\lceil t \rceil$ targets; from~\cite{multi-target}, the competitive
ratio is at least $\phi(m-t)=\phi(m-\lceil \frac{W}{w_{\max,I}}\rceil)$. The upper and lower bounds in the extreme case are given by
Lemma~\ref{thm:spastiko.case}.
\end{proof}
\bigskip

\proof{Proof of Theorem~\ref{thm:subsets}.}
The upper bound in the general case follows from Lemmas~\ref{thm:reduction} and~\ref{thm:SubsetSearchUpper}. 
For the lower bound in the general case,
given $s<m$, consider an instance $I$ in which $W=sw$, for some $w>0$, and there are $s$ targets of weight $w$, 
while all other targets have weight $0$. Suppose that $s-1$ of positive-weight 
targets are very close to the origin, and can be found at negligible cost. Thus, on this instance, weighted search is as hard
as locating one target in $m-(s-1)$ rays, and thus has competitive ratio 
\[ 1+2\frac{b_{m-s+1,1}^{m-s+1}}{b_{m-s+1,1}-1}=
1+2\frac{b_{m-s}^{m-s+1}}{b_{m-s}-1}=\phi(m-s),
\]
where the second equality follows from~\eqref{eq:defb}. 
The upper and lower bounds in the extreme case are given by
Lemma~\ref{thm:spastiko.case}.
\endproof
\bigskip

\section{The competitive ratio of \AdSub}
\label{sec:analysis}
In this section we determine the competitive ratio of \AdSub, by proving Lemma~\ref{thm:SubsetSearchUpper}. 
The following two lemmas, that are necessary but quite technical, 
provide some useful properties of the functions $\phi$ and $b$. 
\begin{lemma}
\label{lem:boundsf}
The function $\phi$ is increasing and
$
	\phi(q) - \phi(q-1) \ge 2e,
$
for all $q \in \mathbb{N}^+$, where we define $\phi(0) = \lim_{x \to 0} \phi(x) = 3$. 
\end{lemma}

\medskip
\proof{Proof.} 
Define the function
\[
	h(x) = (x+1)\left(1 + \frac{1}{x}\right)^x.
\]
Then $\phi(x) = 1 + 2h(x)$, and the monotonicity of both $\phi$ and $h$ follows from the simple fact that $(1 + 1/x)^x$ is increasing.  Moreover, by direct computation we obtain that $\phi(1) - \phi(0) > 2e$ and $\phi(2)-\phi(1) > 2e$. 
Let 
\[
	H(x) = h(x) - h(x-1).
\]
We will argue that
\begin{equation}
\label{eq:ratioH}
	H(x) \le H(x-1)  \quad\text{for}\quad x\ge 3.
\end{equation}
Given~\eqref{eq:ratioH} the claim that $\phi(q) - \phi(q-1) \ge 2e$, for all $q \in \mathbb{N}^+$, can be shown as follows. Since $(1 + 1/x)^x$ is monotone increasing, we have that
\begin{align*}
	\lim_{x \to \infty}  H(x) &= \lim_{x \to \infty} (x+1)\left(1 + \frac1x\right)^x - x\left(1 + \frac1{x-1}\right)^{x-1} \\
	& \le \lim_{x \to \infty} x\left(1 + \frac1{x-1}\right)^{x-1} - (x-1)\left(1 + \frac1{x-2}\right)^{x-2} 
	\tag{From~\eqref{eq:ratioH}}\\
	& \leq \lim_{x \to \infty} \left(1 + \frac1{x-2}\right)^{x-2} 
    \tag{monotonicity of $(1 + 1/x)^x$} \\ 
	&= e,
\end{align*}
and 
\[
	\lim_{x \to \infty} H(x) \geq 
	\lim_{x \to \infty} \left(1 + \frac{1}{x-1}\right)^{x-1}= e.
\]
Thus, $\lim_{x \to \infty} H(x) = e$. Moreover,~\eqref{eq:ratioH} guarantees that $(H(q))_{q \in \mathbb{N}^+}$ is a non-increasing sequence. Summarizing, we obtain that $H(x) \ge e$ for all $x \ge 3$ and thus
\[
	h(x) - h(x-1) \ge e \implies \phi(x) - \phi(x-1) \ge 2e,
\]
as claimed.

In order to show~\eqref{eq:ratioH}, note that it is equivalent to
\[	
	h(x-1) \ge \frac{h(x) + h(x-2)}2 \quad\text{for}\quad x\ge 3.
\]
We will argue that $h''(x) < 0$ for $x \ge 3$, which shows that $h$ is concave for $x \ge 3$ and thus satisfies the above inequality, 
which completes the proof. Basic calculus reveals that
\[
	h''(x) = \frac{h(x) t(x)}{x(x+1)}, \text{ where } t(x) = x(x+1)\log\left(1 + \frac{1}{x}\right)^2 - 1.
\]
Since $h(x)$ and $x(x+1)$ are positive, it remains to show that $t(x) < 0$ for $x \ge 3$. Using the Taylor series expansion of the logarithm we infer that
\[
	\ln(1+y) \le y - y^2/2 + y^3/3 - y^4/4  + y^5/5, \quad |y|<1.
\]
Thus
\[
	t(x) \le x(x+1)\left(\frac1x - \frac1{2x^2} +\frac1{3x^3}-\frac1{4x^4}+\frac1{5x^5}   \right)^2 - 1.
\]
By expanding the right-hand side we obtain that 
\[
	t(x)\leq \frac {\alpha_7{x}^{7}+\alpha_6{x}^{6}+\alpha_5{x}^5+
\alpha_4{x}^4+\alpha_3{x}^3+\alpha_2{x}^2+\alpha_1x+\alpha_0}{3600x^9},
\]
with
\[
	\alpha_7 = -300, \alpha_6 = 300, \alpha_5 = -260,\alpha_4 = 1420,
\]
and
\[
	\alpha_3 = -615, \alpha_2 = 345, \alpha_1 = -216, \alpha_0 = 144.  
\]
Note that the upper bound for $t$ is less than 0 for $x\in\{3,4,5\}$. Moreover, the values of the $\alpha_i$'s guarantee that for all $x \ge 3$
\[
	\alpha_{i}x^i + \alpha_{i-1}x^{i-1} 
	= (\alpha_{i}x + \alpha_{i-1})x^{i-1}
	\le 0, \quad i\in\{7,3,1\},
\]
and thus $t(x) \le (-260x^5 + 1420x^4)/3600x^9$. However, for $x\ge 6$, this is negative (since $6 \cdot 260 > 1420$), and the proof is completed.
\endproof
\bigskip
We also obtain the following (crude) consequence of Lemma~\ref{lem:boundsf}.

\begin{corollary}
For every $x,y \in \mathbb{N}^+$, it holds that 
$\frac{b_x^{x+1}}{b_x-1}+2y \leq \frac{b_x^{x+y+1}}{b_{x+y}-1}$.
\label{lemma:keeponlyG}
\end{corollary}
\medskip
\proof{Proof.}
From Lemma~\ref{lem:boundsf} we know that for every $x \in \mathbb{N}^+$, $\frac{b_{x}^{x+1}}{b_{x}-1} - \frac{b_{x-1}^{x}}{b_{x-1}-1} \ge e > 2$. 
Using a simple inductive argument, it follows that for every $y \in \mathbb{N}^+$, 
\[
\frac{b_x^{x+y+1}}{b_{x+y}-1} -\frac{b_x^{x+1}}{b_x-1} \geq e\cdot y>2y.
\]
\endproof
\bigskip
We will also make use of the following rather technical estimate.
\begin{lemma}
\label{lem:technicality}
For $q,l \in \mathbb{N}$ let
\[
	h_{q,\ell} = b_{q+1}^{-\ell-1} \left(\frac{b_q^\ell}{b_q-1} + 1 + b_{q+1}\right) (b_{q+1} - 1).
\]
Then $h_{q,l} \leq 1$ for all $1 \le \ell \le q+1$.
\end{lemma}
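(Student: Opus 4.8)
The plan is to treat $h_{q,\ell}$ as a function of a real variable $\ell$ and exploit convexity to reduce the claim to its two endpoints. First I would clear denominators using $b_q-1=1/q$ and $b_{q+1}-1=1/(q+1)$, which turns $\frac{b_q^\ell}{b_q-1}$ into $q\,b_q^\ell$ and the trailing factor $(b_{q+1}-1)$ into $1/(q+1)$, so that the desired bound $h_{q,\ell}\le 1$ becomes the cleaner equivalent inequality
\[
	q\, b_q^{\ell} + 1 + b_{q+1} \;\le\; (q+1)\, b_{q+1}^{\ell+1}.
\]

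Next I would rewrite $h_{q,\ell}$ as a positive combination of two exponentials in $\ell$, namely
\[
	h_{q,\ell} = \frac{q}{(q+1)\,b_{q+1}}\left(\frac{b_q}{b_{q+1}}\right)^{\ell} + \frac{1+b_{q+1}}{(q+1)\,b_{q+1}}\, b_{q+1}^{-\ell}.
\]
Since $b_q > b_{q+1} > 1$, both $(b_q/b_{q+1})^\ell = e^{\ell\ln(b_q/b_{q+1})}$ and $b_{q+1}^{-\ell}=e^{-\ell\ln b_{q+1}}$ are convex functions of $\ell$, and their coefficients are positive; hence $h_{q,\ell}$ is convex in $\ell$. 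A convex function on $[1,q+1]$ attains its maximum at an endpoint, so it suffices to verify $h_{q,1}\le 1$ and $h_{q,q+1}\le 1$.

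For $\ell=1$, a direct computation using $b_q/(b_q-1)=q+1$ yields $h_{q,1}=1$ exactly, so the bound is tight there and in particular $h_{q,1}\le 1$. The endpoint $\ell=q+1$ is where the content lies, and the key is to connect it back to Lemma~\ref{lem:boundsf}. Recalling $h(x)=(x+1)(1+1/x)^x$, the algebraic identities
\[
	q\, b_q^{q+1} = (q+1)(1+1/q)^{q} = h(q), \qquad (q+1)\, b_{q+1}^{q+2} = (q+2)(1+1/(q+1))^{q+1} = h(q+1)
\]
recast the required inequality $q\,b_q^{q+1} + 1 + b_{q+1} \le (q+1)\,b_{q+1}^{q+2}$ as $h(q+1)-h(q)\ge 1+b_{q+1}$. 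By Lemma~\ref{lem:boundsf} we have $h(q+1)-h(q)\ge e$, while $1+b_{q+1}=2+\tfrac{1}{q+1}\le \tfrac52 < e$ for $q\ge 1$; combining the two settles this endpoint.

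The main obstacle is precisely the $\ell=q+1$ case: handled naively it is a messy inequality between competing exponential-type expressions. The crucial step that makes it tractable is spotting the two identities above, which rewrite both sides in terms of $h$ and thereby let us reuse the gap bound $h(q+1)-h(q)\ge e$ already established in Lemma~\ref{lem:boundsf}. The convexity reduction, though routine, is what lets us sidestep all intermediate values of $\ell$ and localize the analysis to the two endpoints.
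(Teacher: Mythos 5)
Your proof is correct, and while the first half coincides with the paper's, the decisive step is handled by a genuinely different and shorter argument. Like the paper, you clear denominators, write $h_{q,\ell}$ as a positive combination $aX^\ell+bY^\ell$ of exponentials in $\ell$, invoke convexity to reduce to the endpoints $\ell=1$ and $\ell=q+1$, and check $h_{q,1}=1$ by direct computation. The divergence is at $\ell=q+1$: the paper attacks $h_{q,q+1}\le 1$ head-on, verifying $q=1,2,3$ by hand and, for $q\ge 4$, combining the estimates $(1+y)^N\le 1+yN+y^2N^2/2$ and $1-x\le e^{-x}$ to reduce the claim to the positivity of an explicit degree-six polynomial whose coefficients must then be checked sign by sign. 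You instead observe the identities $q\,b_q^{q+1}=(q+1)(1+1/q)^q=h(q)$ and $(q+1)\,b_{q+1}^{q+2}=(q+2)(1+1/(q+1))^{q+1}=h(q+1)$, so the endpoint inequality becomes $h(q+1)-h(q)\ge 1+b_{q+1}=2+\frac{1}{q+1}$, which follows from the gap bound $h(q+1)-h(q)\ge e$ already proved in Lemma~\ref{lem:boundsf} (there is no circularity, as that lemma is established independently). This buys a much cleaner proof that also explains \emph{why} the endpoint inequality holds — it is exactly the $2e$ increment of $\phi$ in disguise — at the mild cost of making Lemma~\ref{lem:technicality} depend on Lemma~\ref{lem:boundsf}; the paper's computation is self-contained but opaque. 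Two small points to tidy up: state explicitly that for $q=0$ the interval collapses to $\ell=1$ (your numerical bound $2+\frac1{q+1}<e$ is claimed only for $q\ge 1$, and indeed at $q=0$ one has equality $h(1)-h(0)=3=1+b_1$), and note that convexity of $aX^\ell+bY^\ell$ needs only $a,b,X,Y>0$, not the ordering $b_q>b_{q+1}>1$.
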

\medskip
\proof{Proof.}
Substituting the value of $b_q$ and $b_{q+1}$ yields after some straightforward algebraic manipulations
\begin{equation}
\label{eq:hqelltmp}
	h_{q,\ell} = \left(q\Big(1 + \frac1{q(q+2)}\Big)^\ell +  \frac{2q+3}{q+1}\Big(1 - \frac1{q+2}\Big)^{\ell}\right)\frac1{q+2}.
\end{equation}
When viewed as a function of $\ell$, $h_{q,\ell}$ is of the form $aX^\ell + bY^\ell$,  for some $a,b,X,Y > 0$. The second derivative of this function is $aX^\ell \ln(X)^2 + b Y^\ell \ln(Y)^2$. The positivity of $a,b,X,Y$ implies that this is always non-negative, and thus $h_{q,\ell}$ is convex in $\ell$; we obtain that $h_{q,\ell} \le \max\{h_{q, 1}, h_{q,q+1}\}$ for all $\ell \in [1,q+1]$.

It is straightforward to verify that $h_{q,1} = 1$. In the remainder we show that $h_{q,q+1} \le 1$, which will complete the proof. The cases $q = 1,2,3$ are also verified immediately, so we can further assume  that $q \ge 4$. To bound $h_{q,q+1}$ we will first prove the auxiliary fact 
\begin{equation}
\label{eq:auxUpperBound}
	(1 + y)^N \le 1 + yN + y^2N^2/2, ~~ \text{for all } N \in \mathbb{N}_0 \text{ and } y \le N^{-2}.
\end{equation}
To show~\eqref{eq:auxUpperBound}, let us fix $N$ and $y$ as required. We will show by induction that $(1 + y)^n \le  1 + yn + y^2n^2/2$ for all $0 \le n \le N$. The case $n = 0$ is immediate. Moreover, for $0 \le n \le N-1$, by applying the induction hypothesis, we have 
\[
\begin{split}
	(1 + y)^{n+1} & \le (1 + y)(1 + yn + y^2n^2/2)
	= 1 + y(n+1) + (n^2 + 2n + yn^2)y^2/2.
\end{split}
\]
Together with $y \le N^{-2}$ and $n \le N$ this establishes~\eqref{eq:auxUpperBound}.

Before we proceed with bounding $h_{q,q+1}$ we make two auxiliary observations. First, using the bound in~\eqref{eq:auxUpperBound} and that obviously $q(q+2) \ge q^2$ we readily obtain
\[
	\Big(1 + \frac1{q(q+2)}\Big)^{q+1}
	\le \Big(1 + \frac1{q(q+2)}\Big)  \Big(1 + \frac1{q+2} + \frac1{2(q+2)^2}\Big).
\]
Second, using that $1 - x \le e^{-x}$, we obtain
\[
	\Big(1 - \frac1{q+2}\Big)^{q+1}
	= 	\Big(1 - \frac1{q+2}\Big)^{q+2} \cdot  \Big(1 + \frac1{q+1}\Big)^{-1}
	\le \frac1e \cdot \Big(1 + \frac1{q+1}\Big).
\]
By substituting both auxiliary observations in~\eqref{eq:hqelltmp} and collecting terms we obtain that there are polynomials $P,Q$ such that
\[
	h_{q,q+1} \le \left(q\Big(1+\frac1{q(q+2)}\Big)\Big(1 + \frac1{q+2}+\frac1{2(q+2)^2}\Big) + \frac{2q+3}{(q+1)e}\Big(1+\frac1{q+1}\Big)\right)\frac1{q+2}= 1 - P(q)/Q(q),
\]
where $Q(q) = 2e(q+2)^4(q+1)^2 > 0$ and $P(q) = \sum_{i=0}^5 \alpha_i q^i$ is a polynomial of degree 5 such that
\[
	\alpha_5 = 2e-4, \alpha_4 = 17e-38, \alpha_3 = 56e-144,
\]
all $> 0$, 
and
\[
	\alpha_2 = 88e-272, \alpha_1 = 66e-256, \alpha_0 = 19e-96,
\]
all $< 0$. Since $\alpha_5 > 0$ the polynomial $P$ is eventually positive, implying that $h_{q,q+1} \le 1$ whenever $q$ is sufficiently large. Moreover, let us write
\[
	P(q) = q^2(q^3\alpha_5 + \alpha_2) + q(q^3\alpha_4 + \alpha_1) + (q^3\alpha_3 + \alpha_0),
\]
Note that (with plenty of room to spare)
\[
	4^3 \alpha_i > -\alpha_{i-3}, \quad \text{for} \quad i\in\{3,4,5\}.
\]
and thus $q^3 \alpha_i + \alpha_{i-3} > 0$ for all $q\ge 4$ and $i\in\{3,4,5\}$. Thus $P(q) > 0$ for $q \ge 4$ and the proof is completed.
\endproof
\bigskip

We fix some notation that we will use throughout the proof of Lemma~\ref{thm:SubsetSearchUpper}. Let $s = |S|$ denote the cardinality of the sought set of targets, and let $t$ denote the total number of targets discovered by  
{\AdSub} at the moment in which all targets in $S$ have been found. That is, $f = t+1$ upon termination. Clearly, we have
$
	s \le t \le m.
$
For each $1 \le j \le t$ we say that the strategy in in \emph{Phase}~$j$ if the number of targets discovered at that point in time is equal to $j-1$. That is, the 
strategy starts in Phase 1, repeats lines 4--10 in the statement of Algorithm~\ref{alg:exponential.deterministic} until eventually a target is found, proceeds to Phase 2, repeats lines 4--10 in the statement until the second target is found, and so forth. Let $\ell_j \ge 1$ be the number of iterations of the loop in lines 4--10 performed in the $j$-th phase, with $1 \le j \le t$. Note that in Phase $j$, the number of rays explored unsuccessfully (i.e., without finding a target)
equals $\ell_j-1$, and in the $\ell_j$-th iteration a target was discovered. Moreover, let $D_j$ denote the distance of the $j$-th discovered target, $1 \le j \le t$. Since the last discovered target must be in $S$, we obtain the straightforward bound
\begin{equation}
\label{eq:OPTlb}
	d_S \ge D_t.
\end{equation}
A short roadmap to the proof of Lemma~\ref{thm:SubsetSearchUpper} is as follows. We will assume first the more general case $|S| < m$; at the end we argue how to handle the special case $|S| = m$. In the first step, we derive an upper bound on the algorithm's cost; see Lemma~\ref{lemma:main.upper}. This is a function of several parameters, in particular the distances of all the targets that have been discovered, and the cost of the unsuccessful explorations made by the searcher. In the second step we derive upper and lower bounds on the distances of the discovered targets; this is accomplished in Lemma~\ref{lemma:main.lower}. Note that the lower bound is particularly useful, since the last discovered target has to be in $S$ and provides a lower bound for $d_S$. In the third (and most technical) step, we combine all these bounds, using an inductive argument, to arrive at the desired conclusion. This will be accomplished in Lemma~\ref{lemma:G}.

We begin by deriving a bound on the total distance traversed by the searcher, and we write $c(I) = \text{c({\AdSub}, $I$)}$ for brevity. 
Define $Y_j = \prod_{1 \le i < j} b_{m,i}^{\ell_i - 1}$.
\begin{lemma}
For the search cost $c(I)$ incurred by the strategy on instance $I$ 
\begin{equation}
\label{eq:upperSubset}
	c(I) \le   \frac{2Y_{t+1}b_{m,t}}{b_{m,t}-1}
	+ 2 \sum_{1 \le j < t} (D_j + Y_{j+1}) + D_t.
\end{equation}
\label{lemma:main.upper}
\end{lemma}

\medskip
\proof{Proof.}
Note that in Phase 1 the strategy explores in a cyclic fashion the rays to distances $b_{m,1}, b_{m,1}^2, \dots, b_{m,1}^{\ell_1 - 1}$ (going through each ray twice), and then discovers a target at distance $D_1$. Similarly, in Phase 2 it explores the rays to distances $b_{m,1}^{\ell_1 - 1} b_{m,2}, \dots, b_{m,1}^{\ell_1 - 1}b_{m,2}^{\ell_2 - 1}$ and then discovers a target at distance $D_2$.
More generally, in Phase $j$, $1\le j < t$, the distance traversed equals
\[
	2 \cdot \prod_{1 \le i < j} b_{m,i}^{\ell_i - 1} \cdot \sum_{1 \le i < \ell_j} b_{m, j}^i + 2D_j 
	=  2 \cdot \prod_{1 \le i < j} b_{m,i}^{\ell_i - 1} \cdot \frac{b_{m,j}^{\ell_j}-b_{m,j}}{b_{m,j} - 1} + 2D_j.
\]
(As usual, the empty product equals 1.)
Similarly, and because after locating the $t$-th target the searcher does not return to the origin, the distance traversed in the final phase equals
\[
	2 \cdot \prod_{1 \le i < t} b_{m,i}^{\ell_i - 1} \cdot \frac{b_{m,t}^{\ell_t}-b_{m,t}}{b_{m,t} - 1} + D_t.
\]
Using the fact that $Y_j = \prod_{1 \le i < j} b_{m,i}^{\ell_i - 1}$, where $1 \le j \le t+1$, we obtain
\begin{equation}
	c(I) = 2 \sum_{1 \le j \le t} Y_j \frac{b_{m,j}^{\ell_j}-b_{m,j}}{b_{m,j} - 1}
	+ 2 \sum_{1 \le j < t} D_j + D_t.
\label{eq:cost.upper.split1}
\end{equation}
Note that for $1 \le j \le t$
\[
	Y_j \frac{b_{m,j}^{\ell_j}-b_{m,j}}{b_{m,j} - 1}
	 = (Y_{j+1} - Y_j)\frac{b_{m,j}}{b_{m,j}-1}
	 = (Y_{j+1} - Y_j)(m-j+1), 
\]
and consequently
\begin{eqnarray*}
\sum_{1 \le j \le t} Y_j \frac{b_{m,j}^{\ell_j}-b_{m,j}}{b_{m,j} - 1} &=& \sum_{1 \le j \le t} (Y_{j+1} - Y_j)(m-j+1) \nonumber \\
&=& Y_{t+1} \frac{b_{m,t}}{b_{m,t}-1}+ \sum_{1\leq j <t} Y_{j+1}-Y_1 \cdot m \leq Y_{t+1} \frac{b_{m,t}}{b_{m,t}-1}+ \sum_{1\leq j <t} Y_{j+1}.
\label{eq:cost.upper.split2}
\end{eqnarray*}
Substituting this into~\eqref{eq:cost.upper.split1} yields the lemma.
\endproof
\bigskip

Having accomplished the task of providing an appropriate upper bound for {\AdSub} we proceed with deriving explicit bounds for the distances of the targets located by the strategy. 
\begin{lemma}
For the distance $D_j$ of the $j$-th discovered target by {\AdSub} it holds that 
\begin{equation}
\label{eq:upperDj}
	D_j \le Y_j b_{m,j}^{\ell_j} = Y_{j+1} b_{m,j}, \quad \textrm{for all } \ 1 \le j \le t.
\end{equation}
Moreover, suppose that the $j$-th discovered target is at ray $0 \le r \le m-j$, and let $j'$ with $1 \le j' \leq j$ be such that ray $r$ was last explored in Phase $j'$, prior to discovering the target in Phase $j$.
Then
\begin{equation}
\label{eq:lowerDj}
	D_j
	\ge
	Y_{j'} \cdot b_{m,j'}^{-m + j + \sum_{j' \le i \le j} (\ell_i-1)}.
\end{equation}
\label{lemma:main.lower}
\end{lemma}

\proof{Proof.}
In Phase $j$, with $1 \le j \le t$, the largest distance to which a ray is explored is $Y_j \cdot b_{m,j}^{\ell_j}$. Thus
$D_j \le Y_j b_{m,j}^{\ell_j} = Y_{j+1} b_{m,j}$.
For the lower bound, note that there exists $x_j$ with $1 \le x_j \le \ell_{j'}-1$ such that this ray was explored up to a depth of $Y_{j'} b_{m,j'}^{x_j}$, and so $D_j \ge Y_{j'} b_{m,j'}^{x_j}$. 
The number of targets discovered between the last time ray $r$ was explored before discovering the target at distance $D_j$, and the time at which this target was discovered is equal to  $j-j'+1$ (including the said target). Since the number of remaining rays in Phase $j'$ equals $m-j'+1$, we obtain
\[
	m - j' + 1 = (j - j'+1) + \sum_{i = j'+1}^j (\ell_i - 1) + (\ell_{j'}-1-x_j).
\]
Thus,
$
	\sum_{j' \le i \le j} (\ell_i - 1) -m + j = x_j,
$
and so
$
	D_j
	\ge
	Y_{j'} \cdot b_{m,j'}^{-m + j + \sum_{j' \le i \le j} (\ell_i-1)}.
$
\endproof
\bigskip

Moreover, for $x_j$ as defined in the proof of Lemma~\ref{lemma:main.lower}, since $1 \le x_j \le \ell_{j'}-1$ we obtain the bounds
\begin{equation}
\label{eq:boundsL}
	\sum_{j' \le i \le j}(\ell_i-1) \ge m-j+1
	\text{ and }
	\sum_{j' < i \le j}(\ell_i-1) \le m-j,
\end{equation}
that will be useful later. We also obtain a simpler lower bound on $D_j$, namely
\begin{equation}
\label{eq:lowerDjweak}
	D_j \ge Y_j \cdot b_{m,j}^{\ell_j - m + j -1} \quad \text{for all} \quad \ell_j \ge 1.
\end{equation}
To see why~\eqref{eq:lowerDjweak} holds, note that from definition of $Y_j$, we have that
\[
Y_{j'}=Y_j \prod_{j'\leq i< j} b_{m,i}^{1-\ell_i} \geq Y_j \prod_{j'\leq i< j} b_{m,j}^{1-\ell_i}= Y_j b_{m,j}^{\ell_j-1} \prod_{j'\leq i \leq j} b_{m,j}^{1-\ell_i} \geq Y_j b_{m,j}^{\ell_j-1} b_{m,j}^{j-m},
\] 
where the first inequality follows from the fact that $b_{m,h}$ is increasing in $h$, and the last inequality follows from~\eqref{eq:boundsL}.

We introduce some additional notation to facilitate the further analysis. Combining~\eqref{eq:lowerDjweak} with~\eqref{eq:upperDj} we infer that for every $1 \le j \le t$ there exist $\gamma_j$ such that
\[
	D_j = \gamma_j \cdot Y_{j+1}, \quad \text{where} \quad  b_{m,j}^{-m+j} \le \gamma_j \le b_{m,j}.
\]
Moreover, let $J$ be the set of indexes in $\{0, \dots, t-1\}$ such that for each $j \in J$ we have that the target  discovered in Phase $j$ is in $S$. That is, we have $|J| = |S|-1$ and moreover, strengthening the bound in~\eqref{eq:OPTlb}
\begin{equation}
\label{eq:OPTlb2}
	d_S \ge \sum_{j \in J} D_j + D_t = \sum_{j \in J} \gamma_j \cdot Y_{j+1} + D_t.
\end{equation}
Let also $\overline{J} = \{1, \dots, t - 1\} \setminus J$.

With this notation in place, we can now bound the competitive ratio of {\AdSub}, towards the proof of Lemma~\ref{thm:SubsetSearchUpper}. 
Specifically, by 
applying Lemma~\ref{lemma:main.upper} and~\eqref{eq:OPTlb2}, we have 
\begin{eqnarray}
	 \frac{c(I)}{d_S}
	&\le& \frac{\frac{2Y_{t+1}b_{m,t}}{b_{m,t}-1}
	+ 2 \sum_{1 \le j < t} (1 + \gamma_j)Y_{j+1} + D_t}{\sum_{j \in J} \gamma_j \cdot Y_{j+1} + D_t} \nonumber \\
	&=& \frac{\frac{2Y_{t+1}b_{m,t}}{b_{m,t}-1}
	+ 2 (\sum_{j \in J} (1 + \gamma_j)Y_{j+1} +\sum_{j \in \overline{J}} (1 + \gamma_j)Y_{j+1}) + D_t}{\sum_{j \in J} \gamma_j \cdot Y_{j+1} + D_t}
	\nonumber \\ 
	 &\le&   2\max\Big\{\max_{j \in J}\frac{1 + \gamma_j}{\gamma_j}, H\Big\} + 1,
\label{eq:cIdS}
\end{eqnarray}
where $H$ is defined as 
\begin{equation}
\label{eq:H}
	H = \frac{\frac{Y_{t+1}b_{m,t}}{b_{m,t}-1} + \sum_{j \in \overline{J}}Y_{j+1}(1 + \gamma_j)}{D_t}.
\end{equation}
Here, we used the fact that for $a_1, \dots, a_N, b_1, \dots, b_N > 0$, we have 
\[
\frac{\sum_{i=1}^N a_i}{\sum_{i=1}^N b_i} \le \max_{1 \le i\le n}\left\{\frac{a_i}{b_i}\right\}.
\]
Note that the function $(1+x)/x$ is decreasing in $x$. Therefore, as $\gamma_j \ge b_{m,j}^{-m+j}$, for any $j \in J$ we have 
\begin{equation}
\frac{1 + \gamma_j}{\gamma_j} \le 1 + b_{m,j}^{m-j}=1+\left(1+\frac{1}{m-j}\right)^{m-j}\leq 1+e< 4.
\label{eq:gamma_j}
\end{equation}
Therefore,~\eqref{eq:cIdS} gives
\begin{equation}
\label{eq:Hremains}
	{c(I)}
	\le \max\left\{9, 1 + 2H\right\} \, d_S.
\end{equation}
Recall that we assume, for the general case that $|S|<m$; at the end we will argue how to handle the extreme case $|S|=m$. The crucial step in the proof 
of Lemma~\ref{thm:SubsetSearchUpper} will be to show that 
\begin{equation}
H \le \frac{b_q^{q+1}}{b_q - 1}, \ \textrm{where } q = m - |S|.
\label{eq:Hmain}
\end{equation}
This will suffice to prove the lemma in the case $|S|<m$. This is because, from Lemma~\ref{lem:boundsf}, it follows that $1+2H\geq 9$, therefore from~\eqref{eq:Hremains}, the competitive ratio of \AdSub \ is at most $1+2H\leq\phi(m-|S|)$, for $|S|<m$. 
Note that~\eqref{eq:Hremains} confirms what one expects intuitively, namely that the contribution from targets in $S$ to the competitive ratio is not significant, and bounded by 9, whereas the contribution of targets that are not in $S$ to the competitive ratio is more substantial, and more challenging to bound. In order to bound $H$,  
we first define $L_{a,b}$ and $Y_{a,b}$ as 
\[
L_{a,b} = \sum_{a \le i \le b} (\ell_i-1)
	~\text{ and }~
	Y_{a,b} = \frac{Y_{b+1}}{Y_{a}}.
\]
Using the definition of $H$~\eqref{eq:H} , the lower bound on $D_t$~\eqref{eq:lowerDj} and the fact $\gamma_j \le b_{m,j}$,
we obtain that for some $t' \le t$,
\begin{equation}
	H \le b_{m,t'}^{m-t-L_{t',t}}\left(\frac{b_{m,t}Y_{t', t}}{b_{m,t}-1} + \sum_{j \in \overline{J}}(1 + b_{m,j})Y_{t',j} \right).
\label{eq:Hfirst}
\end{equation}
This expression is central in our analysis and will indeed help us bound the contribution of targets that are not in $S$.
We first define a partition of ${\overline J}$ into two sets which contain elements in $\overline{J}$ that are greater than,
and respectively smaller than or equal to $t-1$;
namely we define ${\overline J}_> = {\overline J} \cap \{t', \dots, t-1\}$ 
and ${\overline J}_\le = {\overline J} \cap \{1, \dots ,t'-1\}$.
Regarding any $j \in {\overline J}_\le$, note that $Y_{t',j} \le 1$ and hence
$
	b_{m,t'}^{m-t-L_{t',t}} (1 + b_{m,j})Y_{t',j}
	\le 	b_{m,t'}^{m-t-L_{t',t}} \cdot (1 + b_{m,j}).
$
From~\eqref{eq:boundsL} we infer that $L_{t',t} \ge m-t+1$ and so the previous expression is bounded by at most $(1 + b_{m,j}) / b_{m,t'}$. Thus, we obtain that
\begin{equation}
\label{eq:boundjsmall}
	b_{m,t'}^{m-t-L_{t',t}} (1 + b_{m,j})Y_{t',j}
	< \frac{1+b_{m,t'}}{b_{m,t'}} \le 2, \quad j \in {\overline J}_\le.
\end{equation}
In words, \eqref{eq:boundjsmall} shows that each element in ${\overline J}_\le$ contributes at most an additive 2 to the bound in~\eqref{eq:Hfirst}. Moreover, note that for every $j \in {\overline J}_\le$ 
\[
(1+b_{m,j})Y_{t',j} \leq 1+b_{m,j}=1+\frac{1}{m-j}<1+\frac{1}{m-t'},
\]
whereas for every $j \in {\overline J}_>$ 
\[
(1+b_{m,j})Y_{t',j} \geq 1+b_{m,j}=1+\frac{1}{m-j}\geq 1+\frac{1}{m-t'}.
\]
Therefore, the contribution to $H$ of every $j \in {\overline J}_\le$ is smaller than the corresponding contribution of every $j \in {\overline J}_>$. Thus, in order to bound $H$ we can assume, without loss of generality, that ${\overline J}_>$ has maximal cardinality. 

Combining~\eqref{eq:Hfirst},~\eqref{eq:boundjsmall} and the observation on the maximality of ${\overline J}_>$, we have that 
\begin{equation}
H \leq G+2|{\overline J}_\le|,
\quad \textrm{where} \quad G=b_{m,t'}^{m-t-L_{t',t}}\left(\frac{b_{m,t}Y_{t', t}}{b_{m,t}-1} + \sum_{j \in \overline{J}_>}(1 + b_{m,j})Y_{t',j} \right).
\label{eq:HGJ} 
\end{equation}
The following is the main technical lemma of this section. The lemma will help us bound $G$, which in turn will help us prove~\eqref{eq:Hmain}.
\begin{lemma}
\[
G \leq \frac{b^{Q+|\overline{J}_>|+1}_{Q+|\overline{J}_>|}}{b_{Q+|\overline{J}_>|}-1}, \textrm{ where } Q=m-t. 
\]
\label{lemma:G}
\end{lemma}
\proof{Proof.}
For ease of notation, let us denote $|\overline{J}_>|$ by $d$. Since ${\overline J}_> = {\overline J} \cap \{t', \dots, t-1\}$, and since we can assume that ${\overline J}_>$ has maximal cardinality, we have that $d=t-t'$. 

We will prove the lemma by induction on $d$. The base case, $d=0$, follows trivially by direct substitution. To give some intuition into the inductive step, let us also show the lemma in the case $d=1$, namely when $t'=t-1$. By substituting into the expression of $G$~\eqref{eq:HGJ}, we have that 
\[
G= b_{Q+1}^{Q-(\ell_t-1)}\left(\frac{b_{Q}^{\ell_t}}{b_{Q}-1} + (1 + b_{Q+1})\right) \leq b_{Q+1}^{Q-(\ell_t-1)} \frac{b^{\ell_t+1}_{Q+1}}{b_{Q+1}-1}=
\frac{b^{Q+2}_{Q+1}}{b_{Q+1}-1},
\]
where the inequality follows from Lemma~\ref{lem:technicality}, since $\ell_t \ge 1$.

For the inductive step, let us denote by $G_d$ the value of $G$ given that $t-t'=d$. By substituting into~\eqref{eq:HGJ}, we obtain that $G_d$ can be expressed as 
\begin{equation}
G_d=b_{Q+d}^{Q-\sum_{x=0}^d(\ell_{t-x}-1)}
\left(
\frac{b_Q \prod_{x=0}^d b_{Q+x}^{\ell_{t-x}-1}}{b_Q-1}
+ \sum_{y=t-d}^{t-1} (1+b_{m-y}) \prod_{x=t'}^y b_{m,x}^{\ell_x-1}
\right).
\label{eq:G_d}
\end{equation}
We also have that
\begin{equation}
\prod_{x=0}^d b_{Q+x}^{\ell_{t-x}-1}=b_{Q+d}^{\ell_{t-d}-1} \prod_{x=0}^{d-1} b_{Q+x}^{\ell_{t-x}-1}, 
\label{eq:b_{Q+x}}
\end{equation}
and
\begin{eqnarray}
\sum_{y=t-d}^{t-1} (1+b_{m-y}) \prod_{x=t-d}^y b_{m,x}^{\ell_x-1} &=&
\sum_{y=t-(d-1)}^{t-1} (1+b_{m-y}) \prod_{x=t-d}^y b_{m,x}^{\ell_x-1}+ (1+b_{Q+d}) b_{Q+d}^{\ell_{t-d}-1} \nonumber \\
&=& b_{Q+d}^{\ell_{t-d}-1} \left( 
\sum_{y=t-(d-1)}^{t-1} (1+b_{m-y}) \prod_{x=t-(d-1)}^y b_{m,x}^{\ell_x-1}+ (1+b_{Q+d})
\right).
\label{eq:1+b_{m-y}}
\end{eqnarray}
Denote by $F_d$ the quantity $b_{Q+d}^{Q-\sum_{x=0}^d(\ell_{t-x}-1)}$. By substituting~\eqref{eq:b_{Q+x}} and~\eqref{eq:1+b_{m-y}} 
into~\eqref{eq:G_d} we have that
\[
G_d=F_d \cdot b_{Q+d}^{\ell_{t-d}-1} \left(
\sum_{y=t-(d-1)}^{t-1} (1+b_{m-y}) \prod_{x=t-(d-1)}^y b_{m,x}^{\ell_x-1}+ (1+b_{Q+d})
\right),
\]
and we can thus 
obtain a recursive expression for $G_d$, namely
\begin{equation}
G_d=F_d \cdot b_{Q+d}^{\ell_{t-d}-1} \left(
\frac{G_{d-1}}{F_{d-1}}+1+b_{Q+d}
\right).
\label{eq:recursive_Gd}
\end{equation}
From the induction hypothesis, we have 
\[
G_{d-1} \leq \frac{b_{Q+d-1}^{Q+d}}{b_{Q+d-1}-1}. 
\]
Let $L_i$ denote the expression $\sum_{x=0}^i (\ell_{t-x}-1)$. Then $F_{d-1}=b_{Q+d-1}^{Q-L_{d-1}}$, and $F_d=b_{Q+d}^{Q-L_d}$. Therefore,
$\frac{G_{d-1}}{F_{d-1}}= {b_{Q+d-1}^{L_{d-1}+d}}/{b_{Q+d-1}-1}$, and from~\eqref{eq:recursive_Gd} we obtain
\begin{align}
G_d &\leq b_{Q+d}^{Q-L_d} b_{Q+d}^{\ell_{t-d}-1} 
\left(
\frac{b_{Q+d-1}^{L_{d-1}+d}}{b_{Q+d-1}-1}+1+b_{Q+d}
\right) &\nonumber \\
&\leq 
b_{Q+d}^{Q-L_{d-1}} \frac{b_{Q+d}^{L_{d-1}+d+1}}{b_{Q+d}-1} &  \tag{Lemma~\ref{lem:technicality}} \nonumber \\
&= \frac{b_{Q+d}^{Q+d+1}}{b_{Q+d}-1}, \nonumber
\end{align}
where the conditions for applying Lemma~\ref{lem:technicality} follow from the fact that $L_{d-1} \leq Q$, from~\eqref{eq:lowerDjweak}.
This completes the inductive step and the proof of the lemma.
\endproof
\bigskip

We are now ready to prove the competitiveness of \AdSub \ using the above lemma. 

\bigskip
\proof{Proof of Lemma~\ref{thm:SubsetSearchUpper}.}
Consider first the case $|S|<m$. Recall that it suffices to show~\eqref{eq:Hmain}.
We have $|{\overline J}| = |\{1, \dots, t-1\}\setminus J| = t-1 - (|S|-1) = t-|S|$. Moreover, $|{\overline J}_>|=d$. Therefore, 
$|{\overline J}_\leq| =t-|S|-d$. Since $H\leq G+2|{\overline J}_\leq|$, from Lemma~\ref{lemma:G} we obtain that
\[
H \leq \frac{ b_{m-t+d}^{m-t+d+1}}{b_{m-t+d}-1} + 2(t-|S|-d) \leq \frac{b_{m-|S|}^{m-|S|+1}}{b_{m-|S|}-1},
\]
where the second inequality follows from Corollary~\ref{lemma:keeponlyG}. We conclude that~\eqref{eq:Hmain} holds.

Remains to consider the extreme case $|S|=m$.
Here, we adapt~\eqref{eq:upperSubset} to this case. 
Recall that $J$ is  the set of indexes  in $\{0, \dots, t-1\}$ such that for each $j \in J$ we have that the target  discovered in Phase $j$ is in $S$. Thus $J = \{0, \dots, t-1\}$ and ${\overline J} = \emptyset$ in this case. Moreover, note that $\ell_t = \ell_m = 1$, as the last remaining ray is searched until a target is found without the searcher returning to the origin. Then, the bound in~\eqref{eq:upperSubset} changes to
\[
	c(I) \le \frac{2Y_mb_{m,m-1}}{b_{m,m-1}-1} + 2\sum_{j=1}^{m-2}(D_j + Y_{j+1}) + 2D_{m-1} + D_m.
\]
Note that $b_{m,m-1} = 2$. Proceeding as in~\eqref{eq:cIdS} we get that
\[
	\frac{c(I)}{d_S} \le 1 + 2\max\left\{\max_{j \in J}\frac{1+\gamma_j}{\gamma_j}, \frac{2D_{m-1}}{2/3D_{m-1}}, \frac{4Y_m}{D_{m-1}/3 + D_m} \right\}.
\]
From~\eqref{eq:gamma_j}, we know that $\frac{1+\gamma_j}{\gamma_j} \leq 1+e$. Moreover, we have that $D_m \ge Y_m$ and $D_{m-1} \ge Y_m/2$ 
by~\eqref{eq:lowerDjweak}. This implies that $\frac{c(I)}{d_S} \leq 1+2(1+e)$, and the proof is completed. 
\endproof
\bigskip

\section{Conclusion}
\label{sec:conclusions}

In this work we introduced and studied a generalization of the star search problem in which each ray has a weighted target, and the objective for the searcher is to collect a certain aggregate weight accrued by located targets. We showed that weighted search can be reduced to another search problem, namely the subset search problem, and we proposed and analyzed an efficient search strategy for both problems.

It would be interesting to consider weighted search in other domains, for instance by studying search games in which the searcher and the (weighted) hiders can use mixed strategies. 
Another possible application is in the setting of {\em caching games} such as the scatter hoarder's problem of~\cite{alpern2011search}, in which a hider distributes resources across a number of locations, and a searcher tries to retrieve them given some bound on the total search cost.

\bibliographystyle{plain} 
\bibliography{targets}

\end{document}